\def\argmax{\mathop{\rm \arg\!\max}}
\newtheorem{theorem}{Theorem}
\newtheorem{corollary}{Corollary}
\newtheorem{lemma}{Lemma}
\newtheorem{remark}{Remark}
\def\ba{{\bf a}}
\def\bg{{\bf g}}
\def\bh{{\bf h}}
\def\bn{{\bf n}}
\def\bq{{\bf q}}
\def\br{{\bf r}}
\def\bs{{\bf s}}
\def\bu{{\bf u}}
\def\bw{{\bf w}}
\def\bx{{\bf x}}
\def\by{{\bf y}}
\def\bz{{\bf z}}
\def\bA{{\bf A}}
\def\bB{{\bf B}}
\def\bC{{\bf C}}
\def\bG{{\bf G}}
\def\bH{{\bf H}}
\def\bI{{\bf I}}
\def\bP{{\bf P}}
\def\bQ{{\bf Q}}
\def\bR{{\bf R}}
\def\bU{{\bf U}}
\def\bW{{\bf W}}
\def\cA{\mbox{$\mathcal{A}$}}
\def\cC{\mbox{$\mathcal{C}$}}
\def\cN{\mbox{$\mathcal{N}$}}
\def\cP{\mbox{$\mathcal{P}$}}
\def\cQ{\mbox{$\mathcal{Q}$}}
\def\cR{\mbox{$\mathcal{R}$}}
\def\cV{\mbox{$\mathcal{V}$}}
\def\bbC{\mbox{$\mathbb{C}$}}
\def\bbE{\mbox{$\mathbb{E}$}}
\def\bbV{\mbox{$\mathbb{V}$}}
\newcommand{\thickhline}{%
    \noalign {\ifnum 0=`}\fi \hrule height 1pt
    \futurelet \reserved@a \@xhline
}
\newcolumntype{"}{@{\hskip\tabcolsep\vrule width 1pt\hskip\tabcolsep}}
\title{
 Two-Stage Analog Combining in Hybrid Beamforming Systems with Low-Resolution ADCs
}
\author{
Jinseok Choi, \IEEEmembership{Student Member,~IEEE}, Gilwon Lee, \IEEEmembership{Member,~IEEE}, and Brian L. Evans, \IEEEmembership{Fellow,~IEEE} \thanks{
J. Choi, and B. L. Evans are with the Wireless Networking and Communication Group (WNCG), Department of Electrical and Computer Engineering, The University of Texas at Austin, Austin, TX 78701 USA. (e-mail: jinseokchoi89@utexas.edu, bevans@ece.utexas.edu).
G. Lee is with Intel Corporation, Santa Clara, CA 95054 USA. (e-mail: gilwon.lee30@gmail.com).
The authors at The University of Texas at Austin were supported by gift funding from Huawei Technologies.
}
}
\begin{document}
\maketitle

\begin{abstract}
In this paper, we investigate hybrid analog/digital beamforming for multiple-input multiple-output (MIMO) systems with low-resolution analog-to-digital converters (ADCs) for millimeter wave (mmWave) communications.
In the receiver, we propose to split the analog combining subsystem into a channel gain aggregation stage followed by a spreading stage.
%
Both stages use phase shifters.
Our goal is to design the two-stage analog combiner to optimize mutual information (MI) between the transmitted and quantized signals by effectively managing quantization error.
To this end, we formulate an unconstrained MI maximization problem without a constant modulus constraint on analog combiners, and derive a two-stage analog combining solution.
The solution achieves the optimal scaling law with respect to the number of radio frequency  chains and maximizes the MI for homogeneous singular values of a MIMO channel.
We further develop a two-stage analog combining algorithm to implement the derived solution for mmWave channels. 
By decoupling channel gain aggregation and spreading functions from the derived solution, the proposed algorithm implements the two functions by using array response vectors and a discrete Fourier transform matrix under the constant modulus constraint on each matrix element.
Therefore, the proposed algorithm provides a near optimal solution for the unconstrained problem, whereas conventional hybrid approaches offer a near optimal solution only for a constrained problem.
The closed-form approximation of the ergodic rate is derived for the algorithm, showing that a practical digital combiner with two-stage analog combining also achieves the optimal scaling law. 
Simulation results validate the algorithm performance and the derived ergodic rate.
\end{abstract}

\begin{IEEEkeywords}
Two-stage analog combining structure, low-resolution ADCs, mutual information, ergodic rate.
\end{IEEEkeywords}

\section{Introduction}
\label{sec:intro}

Millimeter wave communications have emerged as a promising technology for 5G communications \cite{pi2011introduction,rappaport2013millimeter}.
Utilizing multi-gigahertz bandwidth in 30-300 GHz frequency ranges enables cellular networks to achieve an order of magnitude increase in achievable rate \cite{andrews2014will}, and a large number of antennas can be packed into tranceivers with very small antenna spacing by leveraging the very small wavelength.
Due to the large number of radio frequency (RF) chains and power-demanding high-resolution ADCs coupled with high sampling rates, however, the significant power consumption at the receivers becomes one of the primary challenges to resolve. 
In this paper, we consider hybrid MIMO receivers with low-resolution ADCs for mmWave communications to address such a challenge by reducing both the number of RF chains and quantization resolution of ADCs.
We propose a two-stage analog combining receiver architecture to maximize the mutual information by effectively managing quantization error as shown in Fig. \ref{fig:receiver}.

\begin{figure}[!t]\centering
	\includegraphics[scale = 0.47]{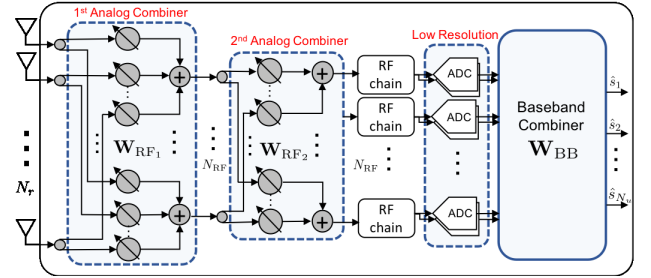}
	\caption{A receiver architecture with two-stage analog combining, low-resolution ADCs and digital combining.} 
	\label{fig:receiver}
\end{figure}

\subsection{Prior Work}
\label{sec:prior}

Hybrid beamforming architectures have been widely investigated to reduce the number of RF chains with minimum communication performance degradation.
Singular value decomposition (SVD)-based analog combining designs were proposed \cite{zhang2005variable,sudarshan2006channel,gholam2011beamforming} as the SVD transceiver maximizes the channel capacity.
In \cite{zhang2005variable}, hybrid precoder and combiner design methods were developed by extracting the phases of the elements of the singular vectors.
Considering correlated channels, the SVD of the MIMO channel covariance matrix was used for analog combiner design to maximize mutual information in \cite{sudarshan2006channel}.
The performance of hybrid precoding systems was analyzed for MIMO downlink communications \cite{ying2015hybrid,sohrabi2016hybrid}.
It was shown that hybrid beamforming systems with a small number of RF chains can achieve the performance comparable to fully digital beamforming systems.
For MIMO uplink communications, the Gram-schmidt based analog combiner design algorithm was developed in \cite{li2016robust} to orthogonalize multiuser signals.

For mmWave channels, hybrid beamforming techniques were proposed by exploiting the limited scattering of the channels \cite{el2014spatially, alkhateeb2014channel, bogale2014beamforming, rusu2015low, chen2015iterative, liang2014low, alkhateeb2015limited,mendez2016hybrid}.
Adopting array response vectors (ARVs) for analog beamformer design, orthogonal matching pursuit (OMP)-based algorithms were developed in \cite{el2014spatially, alkhateeb2014channel, bogale2014beamforming, rusu2015low, chen2015iterative}.
The proposed OMP-based algorithm in \cite{el2014spatially} approximates the minimum mean squared error (MMSE) combiner with a fewer number of RF chains than the number of antennas by using ARV-based analog combiners.
The OMP-based algorithm in \cite{el2014spatially} was further improved  by combining OMP and local search to reduce the computational complexity \cite{rusu2015low} and by iteratively updating the phases of the phase shifters \cite{chen2015iterative}.
A channel estimation technique was also proposed by using hierarchical multi-resolution codebook-based ARVs  with low training overhead in \cite{alkhateeb2014channel}.
By leveraging the sparse nature of mmWave channels, the proposed algorithms with ARV-based analog beamformers achieved the comparable performance with greatly reduced cost and power consumption compared to fully digital systems.

While the previous studies \cite{zhang2005variable,sudarshan2006channel,gholam2011beamforming,ying2015hybrid,sohrabi2016hybrid, li2016robust, el2014spatially, alkhateeb2014channel, bogale2014beamforming, rusu2015low, chen2015iterative, liang2014low, alkhateeb2015limited,mendez2016hybrid} considered infinite-resolution ADCs in hybrid MIMO systems, hybrid beamforming systems with low-resolution ADCs were investigated in \cite{venkateswaran2010analog,choi2017resolution, choi2018user, mo2017hybrid,  abbas2017millimeter, roth2018comparison} to take advantage of both the hybrid beamforming and low-resolution ADC architectures.
The proposed algorithm in \cite{venkateswaran2010analog} attempted to design an analog combiner by minimizing the MSE including the quantization error.
The analog combiner, however, is not constrained with a constant modulus, and the entire combining matrix needs to be designed for each transmitted symbol separately. 
Without considering the coarse quantization effect in combiner design, bit allocation techniques \cite{choi2017resolution} and user scheduling methods \cite{choi2018user} were developed for a given ARV-based analog combiner. 
In \cite{mo2017hybrid, abbas2017millimeter}, an alternating projection method was adopted to implement SVD-based analog combiners.
The performance analysis of hybrid MIMO systems with low-resolution ADCs in \cite{mo2017hybrid} showed the superior tradeoff between performance and power consumption compared to fully digital systems and hybrid systems with infinite-reoslution ADCs. 
In \cite{roth2018comparison}, a subarray antenna structure was considered, and an ARV-based combining algorithm was used to select the ARV that maximizes the aggregated channel gain.
Although the analysis in \cite{mo2017hybrid, abbas2017millimeter, roth2018comparison} provided useful insights for the hybrid architecture with low-resolution ADCs such as the achievable rate and power tradeoff, the quantization error was not explicitly taken into account in the hybrid beamformer design.
Consequently, considering the coarse quantization effect in the analog combiner design is still an open question.

\subsection{Contributions}
\label{sec:cont}


In this paper, we derive a near optimal analog combining solution for an unconstrained MI maximization problem in hybrid MIMO systems with low-resolution ADCs.
We, then, propose a two-stage analog combining architecture to properly implement the derived solution under a constant modulus constraint on each phase shifter.
Splitting the solution into a channel gain aggregation stage by using ARVs and a gain spreading stage by using a discrete Fourier transform (DFT) matrix, the two-stage analog combining structure realizes the derived near optimal combining solution with phase shifter-based analog combiners for mmWave communications.
The contributions of this paper can be summarized as follows:
\begin{itemize}[leftmargin=*]
\item 
Without imposing a constant modulus constraint on an analog combiner, we formulate an unconstrained MI maximization problem for a hybrid MIMO system with low-resolution ADCs.
For a general channel, we derive a near optimal analog combining solution which consists of (1) any semi-unitary matrix that includes the singular vectors of the signal space in the channel matrix and (2) any unitary matrix with constant modulus.
The first and second parts in the derived solution can be considered as a channel gain aggregation function that collects the entire channel gains into the lower dimension and a spreading function that reduces quantization error by spreading the aggregated gains over RF chains, respectively.
We show that the derived solution achieves the optimal scaling law with respect to the number of RF chains and maximizes the MI when the singular values of a MIMO channel are the same.
\item We further develop an ARV-based two-stage analog combining algorithm to implement the derived solution for mmWave channels under the constant modulus constraint on each phase shifter. 
Decoupling the channel gain aggregation and spreading functions from the solution, the algorithm implements the aggregation and spreading functions by using ARVs and a DFT matrix without losing the optimality of the solution in the large antenna array regime.
Therefore, the two-stage analog combiner obtained from the proposed algorithm under the constant modulus constraint also provides a near optimal solution for the unconstrained MI maximization problem, whereas conventional hybrid approaches offer a near optimal solution only for a constrained problem.
Since the DFT matrix is independent of channels, only passive phase shifters need to be appended to a conventional hybrid MIMO architecture with marginal complexity and cost increase, while achieving a large MI gain.
\item We derive a closed-form approximation of the ergodic rate with a maximum ratio combining (MRC) digital combiner for the proposed algorithm.
The derived rate characterizes the ergodic rate performance of the proposed two-stage analog combining architecture in terms of the system parameters including quantization resolution.
The derived rate reveals that the ergodic rate of the MRC combiner achieves the same optimal scaling law with the proposed two-stage analog combiner by reducing the quantization error as the number of RF chains increases. 
\end{itemize}
Simulation results demonstrate that the proposed two-stage analog combining algorithm outperforms conventional algorithms and validate the derived ergodic rate.

{\it Notation}: $\bf{A}$ is a matrix and $\bf{a}$ is a column vector. 
$\mathbf{A}^{H}$ and $\mathbf{A}^T$  denote conjugate transpose and transpose. 
$[{\bf A}]_{i,:}$ and $ \mathbf{a}_i$ indicate the $i$th row and column vector of $\bf A$. 
We denote $a_{i,j}$ or $[\bA]_{i,j}$ as the $\{i,j\}$th element of $\bf A$ and $a_{i}$ as the $i$th element of $\bf a$. $\lambda_i\{{\bf A}\}$ denotes the $i$-th largest singular value of ${\bf A}$. 
$\mathcal{CN}(\mu, \sigma^2)$ is the complex Gaussian distribution with mean $\mu$ and variance $\sigma^2$. 
$\mathbb{E}[\cdot]$ and $\bbV[\cdot]$ represent an expectation and variance operators, respectively.
The correlation matrix is denoted as ${\bf R}_{\bf xy} = \mathbb{E}[{\bf x}{\bf y}^H]$.
The diagonal matrix $\rm diag\{\bf A\}$ has $\{a_{i,i}\}$ at its $i$th diagonal entry, and $\rm diag \{\bf a\}$ or ${\rm diag}\{{\bf a}^T\}$ has $\{a_i\}$ at its $i$th diagonal entry. 
${\rm blkdiag}\{\bA_1,\dots,\bA_N\}$ is a block diagonal matrix with diagonal entries $\bA_1,\cdots,\bA_N$.
${\bf I}$ denotes the identity matrix with a proper dimension and we indicate the dimension $N$ by $\bI_N$ if necessary.
$\bf 0$ denotes a matrix that has all zeros in its elements with a proper dimension.
$\|\bf A\|$ represents $L_2$ norm. 
$|\cdot|$ indicates an absolute value, cardinality, and determinant for a scalar value $a$, a set $\cA$, and a matrix $\bA$, respectively.
${\rm Tr}\{\cdot\}$ is a trace operator and  $x(N)\sim y(N)$ indicates $\lim_{N\to\infty}\frac{x}{y}=1$.


\section{System Model}
\label{sec:system}

We consider single-cell uplink wireless communications in which the BS is equipped with $N_r$ receive antennas and $N_{\rm RF}$ RF chains with $N_{\rm RF} < N_r$.
The antennas are uniform linear arrays (ULA), and 
each RF chain is followed by a pair of low-resolution ADCs.
We assume that the BS serves $N_u$ users each with a single transmit antenna with $N_u \leq N_{\rm RF}$.

\subsection{Channel Model}
\label{subsec:channel}

The channel ${\bf h}_{\gamma,k}$ of user $k$ is assumed to be the sum of the contributions of scatterers that contribute $L_k$ propagation paths to the channel ${\bf h}_{\gamma,k}$ \cite{ertel1998overview}.
For mmWave channels, the number of channel paths $L_k$ is expected to be small due to the limited scattering \cite{rappaport2013millimeter}. 
The discrete-time narrowband channel of user $k$ can be modeled as
\begin{align}
	\label{eq:channel_geo}
	{\bf h}_{\gamma,k} = \frac{1}{\sqrt{\gamma_k}} \bh_{k} = \sqrt{\frac{N_r}{\gamma_k L_k}}\sum_{\ell = 1}^{L_k}g_{\ell,k} {\bf a}(\phi_{\ell,k})
\end{align}
where $\gamma_k$ denotes the pathloss of user $k$, $g_{{\ell},k}$ is the complex gain of the $\ell${th} propagation path of user $k$, and ${\bf a}(\phi_{{\ell},k})$ is the ARV of the receive antennas corresponding to the azimuth AoA of the $\ell$th path of the $k$th user $\phi_{{\ell},k} \in [-\pi/2,\pi/2]$. 
The complex channel gain $g_{{\ell},k}$ follows an independent and identically distributed (i.i.d.) complex Gaussian distribution, $g_{{\ell},k} \overset{i.i.d}{\sim} \mathcal{CN}(0, 1)$.
The ARV ${\bf a}(\theta)$ for the ULA antennas of the BS is given as
\begin{align}
	\nonumber
	{\bf a}(\theta) = \frac{1}{\sqrt{N_r}}\Big[1,e^{-j \pi\vartheta},e^{-j 2\pi \vartheta},\dots,e^{-j (N_r-1)\pi\vartheta}\Big]^T 
\end{align}
where the spatial angle $\vartheta = \frac{2d}{\lambda}\sin(\theta)$ is related to the physical AoA $\theta$, $d$ is the distance between antennas, and $\lambda$ is the signal wave length.
We use $\phi$ and  $\theta$ to denote the physical AoAs of a user channel and physical angles of analog combiners, respectively.
We also use  $\varphi$ and $\vartheta$ to denote the spatial angles for $\phi$ and $\theta$,  respectively, where $ \varphi, \vartheta \in [-1,1]$.

\subsection{Signal and Quantization Model}
\label{subsec:signal}

For simplicity, we consider a homogeneous long-term received SNR network\footnote{We remark that the derived analysis in this paper can also be applicable to a heterogeneous long-term received SNR network with minor modification.} where a conventional uplink power control compensates for the pathloss and shadowing effect to achieve the same long-term received SNR target for all users in the cell \cite{simonsson2008uplink, tejaswi2013survey}. 
Let $\bx =\bP\bs$ be the transmitted user signals where $\bP ={\rm diag}\{\sqrt{\rho\, \gamma_1},\dots,\sqrt{\rho\,\gamma_{N_u}}\}$ is the transmit power matrix and $\bs$ is the $N_u \times 1$ transmitted symbol vector from $N_u$ users.
Further, let $\bH_\gamma = \bH\bB$ represent the $N_r \times N_u$ channel matrix where $\bB = {\rm diag}\{\sqrt{1/\gamma_1},\dots,\sqrt{1/\gamma_{N_u}}\}$.
The received baseband analog signal vector is given as
\begin{align}
    \nonumber
	\br = \bH_\gamma \bx + \bn =\bH\bB \bP\bs + \bn  =  \sqrt{\rho}\bH\bs + \bn
\end{align} 
where $\bn$ indicates the $N_r \times 1$ additive white noise vector.
We assume zero mean and unit variance for the user symbols $\bs$ and noise $\bn$.
The noise follows the complex Gaussian distribution $\bn \sim \cC\cN({\bf 0},\bI_{N_r})$ and thus, we consider $\rho$ to be the SNR. 
After the BS receives the signals from users, the signals are combined via two analog combiners as shown in Fig. \ref{fig:receiver}.
Then, the received baseband analog signal vector becomes
\begin{align}
	\nonumber
	\by &= \sqrt{\rho}\bW_{\rm RF_2}^H\bW_{\rm RF_1}^H \bH \bs + \bW_{\rm RF_2}^H\bW_{\rm RF_1}^H\bn  \\ 
	\label{eq:y} 
	& = \sqrt{\rho}\bW_{\rm RF}^H \bH \bs + \bW_{\rm RF}^H\bn
\end{align}
where $\bW_{\rm RF} = \bW_{\rm RF_1}\bW_{\rm RF_2}$ denotes the two-stage analog combiner, $\bW_{\rm RF_1} \in \bbC^{N_{r}\times N_{\rm RF}}$ is the first analog combiner, and $\bW_{\rm RF_2} \in \bbC^{N_{\rm RF}\times N_{\rm RF}}$ is the second analog combiner.
Each real and imaginary part of the combined signal \eqref{eq:y} are quantized at ADCs with $b$ quantization bits.
Assuming a MMSE scalar quantizer and Gaussian signaling $\bs \sim \cC\cN({\bf 0}, \bI_{N_u})$, we adopt an additive quantization noise model (AQNM) \cite{fletcher2007robust} which shows reasonable accuracy in the low to medium SNR ranges \cite{orhan2015low}. 
The AQNM approximates the quantization process in linear form, which is equivalent to the approximation with Bussgang decomposition for low-resolution ADCs \cite{mezghani2012capacity}.
The quantized signal vector is expressed as \cite{fletcher2007robust,mezghani2012capacity}
\begin{align}
	\label{eq:yq}
	\by_{\rm q} &= \cQ(\by) 
	= \alpha_b \sqrt{\rho}\bW_{\rm RF}^H \bH \bs + \alpha_b \bW_{\rm RF}^H \bn + \bq
\end{align}
where $\cQ(\cdot)$ is the element-wise quantizer, the scalar quantization gain is $\alpha_b = 1 -\beta_b$ where $\beta_b = \bbE[|y-y_{\rm q}|^2]/\bbE[|y|^2]$, and $\bq$ denotes the quantization noise vector.
For $b > 5$ quantization bits, $\beta_b$ is approximated as $\beta_b \approx \frac
{{\pi}\sqrt{3}}{2}2^{-2b}$. 
For $b \leq 5$, the values of $\beta_b$ are listed in Table 1 in \cite{fan2015uplink}.
The quantization noise vector $\bq$ is uncorrelated to the quantization input $\by$ and follows the complex Gaussian distribution $\bq \sim \cC\cN({\bf 0}, \bR_{\bq \bq})$, where the covariance matrix is given as \cite{fletcher2007robust}
\begin{align}
	\label{eq:Rqq}
	\bR_{\bq \bq}\! = \!\alpha_b\beta_b{\rm diag}\big\{\rho\bW_{\rm RF}^H \bH \bH^H\bW_{\rm RF}\! +\! \bW_{\rm RF}^H\bW_{\rm RF}\big\}.
\end{align}
Then, a digital combiner $\bW_{\rm BB} \in \bbC^{N_{\rm RF}\times N_{\rm RF}}$ is applied to the quantized signal in \eqref{eq:yq} as
\begin{align}
	\label{eq:z}
	\bz = \alpha_b \sqrt{\rho}\bW_{\rm BB}^H\bW_{\rm RF}^H \bH \bs + \alpha_b \bW_{\rm BB}^H \bW_{\rm RF}^H \bn + \bW_{\rm BB}^H\bq.
\end{align}




\section{Optimality of Two-Stage Analog Combining}
\label{sec:analysis}

In this section, we provide a near optimal structure for the first and second analog combiners $\bW_{\rm RF_1}, \bW_{\rm RF_2}$ in low-resolution ADC systems for a general channel.
To this end, we first formulate an unconstrained MI maximization problem without a constant modulus condition on the analog combiner $\bW_{\rm RF}$. 
Then, we derive a near optimal solution for the unconstrained problem, which can be splitted into two different functions corresponding to the two-stage analog combiner.

We consider the MI between the transmit symbols $\bs$ and quantized signals $\by_{\rm q}$ under the AQNM model as a measure to maximize. 
The MI is given as
\begin{align}
	\label{eq:MI}
	&\cC(\bW_{\rm RF})\\  &\! =\!  \log_2\! \Big|\bI_{N_{\rm RF}}\!\! +\! \rho\alpha_b^2\big( \alpha_b^2\bW_{\rm RF}^H\!\bW_{\rm RF}\! +\! {\bf R}_{{\bf q}{\bf q}}\big)^{\!-1}\!\bW_{\rm RF}^H{\bf H}{\bf H}^H\bW_{\rm RF}\Big|. \nonumber
\end{align}
Using \eqref{eq:MI}, we formulate the maximum MI problem by only assuming a semi-unitary constraint on the analog combiner $\bW_{\rm RF}^H\bW_{\rm RF}=\bI_{N_{\rm RF}}$ as in \cite{mo2017hybrid} to keep the effective noise being white Gaussian noise.
Accordingly, the relaxed MI maximization problem is formulated as 
\begin{align}
	\label{eq:P1}
    \cP1: ~ \bW_{\rm RF}^{\rm opt} = \argmax_{\bW_{\rm RF}} ~ \cC(\bW_{\rm RF}),~ \text{s.t. } \bW_{\rm RF}^H\bW_{\rm RF}=\bI.
\end{align}

Under the perfect quantization system where the number of quantization bits is assumed to be infinite, the optimal analog combiner for the problem $\cP1$ is given as the matrix $\bU_{1:N_{\rm RF}}$ that consists of the first $N_{\rm RF}$ left singular vectors of $\bH$. 
The optimal solution $\bW_{\rm RF}^{\rm opt}$ of the problem $\cP1$ with a finite number of quantization bits, however, is still not known.
We first derive an optimal scaling law 
with respect to the number of RF chains $N_{\rm RF}$, and provide a solution that achieves the scaling law.
\begin{theorem}[Optimal scaling law]
\label{thm:optimality_two_stage}
    For fixed $N_{\rm RF}/N_r = \kappa$ with $\kappa \in (0,1)$, the MI with the optimal combiner $\bW_{\rm RF}^{\rm opt}$ for the problem $\cP_1$ scales with $N_{\rm RF}$ as 
    \begin{align}  
       \label{eq:C_opt}
        \cC(\bW_{\rm RF}^{\rm opt}) \sim N_u \log_2 N_{\rm RF}
    \end{align}
    and this optimal scaling law can be achieved by using $\bW_{\rm RF}^\star=\bW_{\rm RF_1}^\star \bW_{\rm RF_2}^\star$ such that:
    \begin{itemize}
        \item[$(i)$] $\bW_{\rm RF_1}^\star = [\bU_{1:N_u} ~ \bU_\perp]$, and 
        \item[$(ii)$] $\bW_{\rm RF_2}^\star$ is any $N_{\rm RF} \times N_{\rm RF}$ unitary matrix that satisfies the constant modulus condition on its elements,  
    \end{itemize}  
    where $\bU_{1:N_u}$ is the matrix of the left-singular vectors corresponding to the first $N_u$ largest singular values of $\bH$ and $\bU_\perp$ denotes the matrix of any orthonormal vectors whose column space is orthogonal to that of $\bU_{1:N_u}$.
\end{theorem}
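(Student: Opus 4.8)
The plan is to prove \eqref{eq:C_opt} by sandwiching. First I would establish a universal upper bound $\cC(\bW_{\rm RF})\le N_u\log_2 N_{\rm RF}\,(1+o(1))$ that holds for \emph{every} feasible combiner, and then show that the specific $\bW_{\rm RF}^\star$ attains it. Since $\bW_{\rm RF}^\star$ is feasible for $\cP1$, we have $\cC(\bW_{\rm RF}^{\rm opt})\ge\cC(\bW_{\rm RF}^\star)$, so the two bounds are forced to coincide asymptotically --- this proves \eqref{eq:C_opt} and the achievability claim at once.

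The first step is to simplify \eqref{eq:MI} under $\bW_{\rm RF}^H\bW_{\rm RF}=\bI$. Setting $\bG\bydef\bW_{\rm RF}^H\bH\bH^H\bW_{\rm RF}$ and using \eqref{eq:Rqq} together with $\alpha_b=1-\beta_b$, the effective noise covariance collapses to $\alpha_b^2\bI+\bR_{\bq\bq}=\alpha_b\bigl(\bI+\beta_b\rho\,{\rm diag}\{\bG\}\bigr)$, so that
\[
\cC(\bW_{\rm RF})=\log_2\Bigl|\bI+\rho\alpha_b\bigl(\bI+\beta_b\rho\,{\rm diag}\{\bG\}\bigr)^{-1}\bG\Bigr|.
\]
Writing $\bA\bydef\rho\alpha_b(\bI+\beta_b\rho\,{\rm diag}\{\bG\})^{-1}$ (diagonal, positive definite, $\|\bA\|\le\rho\alpha_b$) and applying Sylvester's determinant identity, $\cC=\log_2|\bI+\bA^{1/2}\bG\bA^{1/2}|=\sum_{i=1}^{N_u}\log_2(1+\nu_i)$, where $\nu_1\ge\dots\ge\nu_{N_u}\ge0$ are the at most $N_u$ nonzero eigenvalues of the Hermitian positive semidefinite $\bA^{1/2}\bG\bA^{1/2}$ (its rank is at most ${\rm rank}(\bH)\le N_u$). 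Each $\nu_i\le\|\bA\|\,\|\bG\|\le\rho\alpha_b\,\lambda_1\{\bH\}^2$, and the channel model \eqref{eq:channel_geo} gives $\lambda_1\{\bH\}^2\le\|\bH\|_F^2=\Theta(N_r)=\Theta(N_{\rm RF})$; hence $\cC(\bW_{\rm RF})\le N_u\log_2\!\bigl(1+\rho\alpha_b\|\bH\|_F^2\bigr)\sim N_u\log_2 N_{\rm RF}$, which is the upper bound.

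For achievability I would evaluate $\bG^\star\bydef(\bW_{\rm RF}^\star)^H\bH\bH^H\bW_{\rm RF}^\star$; feasibility $(\bW_{\rm RF}^\star)^H\bW_{\rm RF}^\star=\bI$ is immediate from $(\bW_{\rm RF_1}^\star)^H\bW_{\rm RF_1}^\star=\bI$ and the unitarity of $\bW_{\rm RF_2}^\star$. Because $\bW_{\rm RF_1}^\star$ stacks the signal-space left singular vectors of $\bH$ (together with an orthogonal complement), $(\bW_{\rm RF_1}^\star)^H\bH\bH^H\bW_{\rm RF_1}^\star={\rm blkdiag}\{{\rm diag}\{\lambda_1^2,\dots,\lambda_{N_u}^2\},{\bf 0}\}$ with $\lambda_i^2$ the squared singular values of $\bH$, and conjugating by the unitary $\bW_{\rm RF_2}^\star$ does not change its spectrum, so $\bG^\star$ again has eigenvalues $\lambda_1^2,\dots,\lambda_{N_u}^2,0,\dots,0$. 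The crucial observation --- the reason the \emph{second} stage is needed --- is that a matrix that is both unitary and constant modulus must satisfy $|[\bW_{\rm RF_2}^\star]_{kj}|^2=1/N_{\rm RF}$, whence $[\bG^\star]_{jj}=\sum_{k=1}^{N_u}|[\bW_{\rm RF_2}^\star]_{kj}|^2\lambda_k^2=\|\bH\|_F^2/N_{\rm RF}$ for every $j$. Thus ${\rm diag}\{\bG^\star\}$ is a scalar matrix, $\bA$ collapses to $c_\star\bI$ with $c_\star=\rho\alpha_b/(1+\beta_b\rho\|\bH\|_F^2/N_{\rm RF})$, and $\cC(\bW_{\rm RF}^\star)=\sum_{i=1}^{N_u}\log_2(1+c_\star\lambda_i^2)$.

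It then remains to show each summand is $\sim\log_2 N_{\rm RF}$. By \eqref{eq:channel_geo}, $\|\bH\|_F^2/N_{\rm RF}=\Theta(1)$ so $c_\star=\Theta(1)$; moreover, by the asymptotic orthogonality of the ULA array response vectors at distinct AoAs, the Gram matrix $\bH^H\bH$ equals the diagonal matrix of user channel gains $\|\bh_k\|^2=\Theta(N_r)$ up to an $O(1)$ perturbation, so \emph{all} $N_u$ squared singular values obey $\lambda_i\{\bH\}^2=\Theta(N_{\rm RF})$. Therefore each $\log_2(1+c_\star\lambda_i^2)\sim\log_2 N_{\rm RF}$ and $\cC(\bW_{\rm RF}^\star)\sim N_u\log_2 N_{\rm RF}$, which with the upper bound and $\cC(\bW_{\rm RF}^{\rm opt})\ge\cC(\bW_{\rm RF}^\star)$ yields \eqref{eq:C_opt}. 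The two substantive points I expect to be the crux are: (i) that the constant-modulus unitary spreading stage flattens ${\rm diag}\{\bG^\star\}$, so the per-branch quantization penalty encoded in $\bR_{\bq\bq}$ is uniform instead of concentrated on the strong eigenmodes; and (ii) that \emph{every} squared singular value of $\bH$ --- not merely the dominant one --- grows linearly in $N_r$, which is exactly what lifts the bound from $\log_2 N_{\rm RF}$ to $N_u\log_2 N_{\rm RF}$ and is where the array geometry and the $\sqrt{N_r}$ scaling of \eqref{eq:channel_geo} enter. The remaining manipulations are routine linear algebra and matrix-norm estimates.
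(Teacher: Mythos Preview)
Your proposal is correct and, on the achievability side, essentially identical to the paper's argument: you both exploit that a constant-modulus unitary $\bW_{\rm RF_2}^\star$ forces every diagonal entry of $(\bW_{\rm RF}^\star)^H\bH\bH^H\bW_{\rm RF}^\star$ to equal $\|\bH\|_F^2/N_{\rm RF}$, which collapses the quantization-noise term to a scalar and yields $\cC(\bW_{\rm RF}^\star)=\sum_{k=1}^{N_u}\log_2(1+c_\star\lambda_k^2)$; and you both close by invoking the asymptotic orthogonality of the array response vectors so that each $\lambda_k^2/N_r$ tends to a finite positive limit. The paper phrases this as $\frac{1}{N_r}\bH^H\bH\to{\rm diag}\{\frac{1}{L_k}\sum_\ell|g_{\ell,k}|^2\}$, which is exactly your ``all $N_u$ squared singular values are $\Theta(N_r)$.''

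Where you differ is the \emph{upper bound}. The paper decomposes any feasible $\bW_{\rm RF}$ as $[\bU_{||}\ \bU_\perp]\bar\bW_{\rm RF}$, diagonalises the resulting rank-$m$ block, and then applies Jensen's inequality twice (to $\log_2(1+x)$ and to $x/(x+1)$) to reach the bound $m\log_2(1+\alpha_b N_{\rm RF}/(\beta_b m))$, maximised at $m=N_u$. Your route is shorter: after rewriting $\cC=\log_2|\bI+\bA^{1/2}\bG\bA^{1/2}|$ with diagonal $\bA=\rho\alpha_b(\bI+\beta_b\rho\,{\rm diag}\{\bG\})^{-1}$, you simply note that $\bA^{1/2}\bG\bA^{1/2}$ has at most $N_u$ nonzero eigenvalues, each bounded by $\|\bA\|\,\|\bG\|\le\rho\alpha_b\|\bH\|_F^2=\Theta(N_{\rm RF})$, giving $\cC\le N_u\log_2(1+\rho\alpha_b\|\bH\|_F^2)$ directly. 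This is more elementary and avoids the basis decomposition and concavity steps entirely. The price is that your bound is slightly looser (it depends on $\|\bH\|_F^2$ rather than the refined constant $\alpha_b N_{\rm RF}/(\beta_b N_u)$), but for the scaling statement $\sim N_u\log_2 N_{\rm RF}$ the two are equivalent. The paper's finer bound is, however, reused later to prove exact optimality of $\bW_{\rm RF}^\star$ in the equal-singular-value case (Theorem~\ref{thm:equal_eigenvalue_optimal}), which your coarser spectral-norm bound would not deliver without additional work.
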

\begin{proof}
    
Since the optimal solution for $\cP1$  is not known, we first derive an upper bound of $\cC(\bW_{\rm RF})$ and its scaling law with respect to $N_{\rm RF}$.
We, then, show that adopting $\bW_{\rm RF}^\star = \bW_{\rm RF_1}^\star \bW_{\rm RF_2}^\star$, which satisfies the conditions $(i)$ and $(ii)$ in Theorem \ref{thm:optimality_two_stage}, achieves the same scaling law of the upper bound. 
    
An arbitrary semi-unitary analog combiner $\bW_{\rm RF}$ can be decomposed into 
\begin{align}\label{eq:WRF_decomposed}
   \bW_{\rm RF} = [\bU_{||}~\bU_{\perp}] \bar{\bW}_{\rm RF},
\end{align}
where $\bU_{||}$ is an $N_r \times m$ matrix composed of $m$ orthonormal basis vectors whose column space is in the subspace of ${\rm Span}(\bu_1,\cdots,\bu_{N_u})$ with $1\le m \le N_u$, $\bU_{\perp}$ is an $N_r \times (N_{\rm RF}-m)$ matrix composed of ($N_{\rm RF}-m$) orthonormal basis vectors whose column space is in the subspace of ${\rm Span}^\perp(\bu_1,\cdots,\bu_{N_u})$, and $\bar{\bW}_{\rm RF}$ is an $N_{\rm RF} \times N_{\rm RF}$ unitary matrix.
Here, $\bu_i$ is the $i$-th left-singular vector of $\bH$. 
Using \eqref{eq:WRF_decomposed}, the term $\bW_{\rm RF}^H \bH\bH^H\bW_{\rm RF}$ in \eqref{eq:MI} can be re-written as
    \begin{align}\nonumber
       &\bW_{\rm RF}^H \bH\bH^H\bW_{\rm RF} \\ \nonumber
       &= \bar{\bW}_{\rm RF}^H [\bU_{||}~\bU_{\perp}]^H \bU {\pmb \Lambda} \bU^H [\bU_{||}~\bU_{\perp}]\bar{\bW}_{\rm RF}\\ 
       &=\bar{\bW}_{\rm RF}^H \underbrace{\left[ 
       \begin{matrix}
       {\bU_{||}^H\bU_{1:N_u}\pmb \Lambda_{N_u} \bU_{1:N_u}^H\bU_{||}} & {\bf 0} \\
       {\bf 0} & {\bf 0}
       \end{matrix}\right]}_{\triangleq \bQ} \bar{\bW}_{\rm RF} \label{eq:wHHw_change}
    \end{align}
where ${\pmb \Lambda}={\rm diag}\{\lambda_1,\cdots,\lambda_{N_u},0,\cdots,0\} \in \bbC^{N_r\times N_r}$, 
$\pmb \Lambda_{N_u} ={\rm diag}\{ \lambda_1,\dots,\lambda_{N_u}\}$, $\lambda_i$ is the $i$th largest singular value of $\bH\bH^H$, and  $\bU_{1:N_r}=[\bu_1,\cdots,\bu_{N_r}]$. 
The matrix $\bQ$ has $m$ ranks and can be decomposed into $\bQ=\bU_{\bQ}\bar{\pmb \Lambda}\bU_{\bQ}^H$, where $\bU_{\bQ}$ is the $N_{\rm RF} \times N_{\rm RF}$ matrix consisting of $N_{\rm RF}$ singular vectors of $\bQ$; and $\bar{\pmb \Lambda}={\rm diag}\{\bar{\lambda}_1,\cdots,\bar{\lambda}_m,0,\cdots,0\} \in \bbC^{N_{\rm RF} \times N_{\rm RF}}$.
Here, $\bar{\lambda}_i$ is the $i$th largest singular value of $\bQ$. Since $\bU_{\bQ}$ is unitary, $\bar{\bW}_{\rm RF}$ can be re-expressed as 
\begin{align}\label{eq:basis_change}
    \bar{\bW}_{\rm RF}=\bU_{\bQ}\overline{\bW}_{\rm RF}.
\end{align}
and $\overline{\bW}_{\rm RF}$ is still unitary. 
Substituting \eqref{eq:basis_change} into \eqref{eq:wHHw_change}, we have $\bW_{\rm RF}^H\bH\bH^H\bW_{\rm RF} = \overline{\bW}_{\rm RF}^H\bar{\pmb \Lambda}\overline{\bW}_{\rm RF}$ and the MI in \eqref{eq:MI} becomes
\begin{align} \label{eq:c_wrf_wlambdaw}
    & \cC(\bW_{\rm RF}) \\ \nonumber
     &\!= \!\log_2\!\left|\bI\! +\! \frac{\alpha_b}{\beta_b} {\rm diag}^{-1}\!\!\left\{\overline{\bW}_{\rm RF}^H\bar{\pmb \Lambda}\overline{\bW}_{\rm RF}\!+\!\frac{1}{\beta_b\rho}\bI \right\}\!\overline{\bW}_{\rm RF}^H\bar{\pmb \Lambda}\overline{\bW}_{\rm RF} \right|. 
    \end{align}
Let $\bG = \overline{\bW}_{\rm RF}^H \bar{\pmb \Lambda}^{1/2}=[\bG_{\rm sub}~ {\bf 0}]$, where $\bG_{\rm sub}$ is the $N_{\rm RF} \times m$ submatrix of $\bG$. Then, the MI can be upper bounded as 
\begin{align} \nonumber
    &\cC(\bW_{\rm RF}) \\ \nonumber
    &= \log_2\left|\bI_{N_{\rm RF}} + \frac{\alpha_b}{\beta_b} \bG^H{\rm diag}^{-1}\left\{\|[\bG]_{i,:}\|^2+\frac{1}{\beta_b\rho}\right\}\bG \right| \\ \nonumber
    &= \log_2\left|\bI_m + \frac{\alpha_b}{\beta_b} \bG_{\rm sub}^H{\rm diag}^{-1}\left\{\|[\bG_{\rm sub}]_{i,:}\|^2+\frac{1}{\beta_b\rho}\right\}\bG_{\rm sub} \right| \\ \nonumber
    &\overset{(a)}{=}\log_2\left|\bI_m + \frac{\alpha_b}{\beta_b} \tilde{\bG}_{\rm sub}^H\tilde{\bG}_{\rm sub} \right| \\ \nonumber
    &=\sum_{i=1}^m \log_2 \left( 1+\frac{\alpha_b}{\beta_b}\lambda_i\{\tilde{\bG}_{\rm sub}^H\tilde{\bG}_{\rm sub}\} \right) \\ \nonumber
    &\overset{(b)}{\le} m \log_2\left(1+\frac{\alpha_b}{\beta_b m}\sum_{i=1}^m \lambda_i\{\tilde{\bG}_{\rm sub}^H\tilde{\bG}_{\rm sub}\}\right) \\
    &\overset{(c)}{=}m \log_2\left(1+\frac{\alpha_b}{\beta_b m}\sum_{i=1}^{N_{\rm RF}} \frac{\|[\bG_{\rm sub}]_{i,:}\|^2}{\|[\bG_{\rm sub}]_{i,:}\|^2+\frac{1}{\beta_b\rho}} \right)  
    \label{eq:upper_MI_bound}
\end{align}
where $(a)$ follows by letting $\tilde{\bG}_{\rm sub}$ be the matrix whose each row $i$ is given as  $i$-th row of $\bG_{\rm sub}$ normalized by $\big(\|[\bG_{\rm sub}]_{i,:}\|^2+\frac{1}{\beta_b\rho} \big)^{1/2}$; $(b)$ comes from Jensen's inequality and the concavity of $\log_2(1+x)$ for $x>0$; and $(c)$ is from 
\begin{align} \nonumber
	\sum_{i=1}^m \! \lambda_i\{\tilde{\bG}_{\rm sub}^H\tilde{\bG}_{\rm sub}\} \!=\! {\rm Tr}\{\tilde{\bG}_{\rm sub}^H\tilde{\bG}_{\rm sub}\}\!=\!\sum_{i=1}^{N_{\rm RF}}\! \frac{\|[\bG_{\rm sub}]_{i,:}\|^2}{\|[\bG_{\rm sub}]_{i,:}\|^2\!+\!\frac{1}{\beta_b\rho}}.
\end{align}
The upper bound of $\cC(\bW_{\rm RF})$ in \eqref{eq:upper_MI_bound} can further be upper bounded by  $m\log_2(1+\frac{\alpha_b N_{\rm RF}}{\beta_b m})$ because $\frac{\|[\bG_{\rm sub}]_{i,:}\|^2}{\|[\bG_{\rm sub}]_{i,:}\|^2+\frac{1}{\beta_b\rho}} < 1$.
Since the derivative of this bound with respect to $m$ is positive for $m>0$ with any given $\alpha_b, N_{\rm RF}>0$, it is maximized when $m=N_{u}$, and thus, it scales as $N_u \log_2 N_{\rm RF}$, as $N_{\rm RF} \to \infty$. 

Now, we prove that the scaling law can be achieved by the two-stage analog combiner $\bW_{\rm RF}^\star = \bW_{\rm RF_1}^\star\bW_{\rm RF_2}^\star$ in Theorem \ref{thm:optimality_two_stage}.
Let $ \bC \triangleq \bW_{\rm RF_2}^{\star H}{\pmb \Lambda}_{N_{\rm RF}}\bW_{\rm RF_2}^{\star}$.
From $\bW_{\rm RF}^{\star H}\bH\bH^H\bW_{\rm RF}^{\star} = \bW_{\rm RF_2}^{\star H}{\pmb \Lambda}_{N_{\rm RF}}\bW_{\rm RF_2}^{\star} = \bC$ where ${\pmb \Lambda}_{N_{\rm RF}} = {\rm diag}\{\lambda_1,\cdots,\lambda_{N_u},0,\cdots,0\} \in \bbC^{N_{\rm RF}\times N_{\rm RF}}$ and \eqref{eq:c_wrf_wlambdaw}, we have 
\begin{align}
	\nonumber
    &\cC(\bW_{\rm RF}^{\star})\\ 
    \label{eq:c_w1w2}
    &= \log_2\left|\bI_{N_{\rm RF}} + \frac{\alpha_b}{\beta_b} {\rm diag}^{-1}\left\{\bC+\tfrac{1}{\beta_b\rho}\bI_{N_{\rm RF}}\right\}\bC \right| \\
    \label{eq:W2_spread}
    &\overset{(a)}{=}\!\log_2\left|\bI \!+\! \frac{\alpha_b}{\beta_b}\!\left(\frac{\sum_{i=1}^{N_u}\lambda_i}{N_{\rm RF}}\!+\!\frac{1}{\beta_b\rho}\right)^{\!\!-1}\!\!\!\!\!\bW_{\rm RF_2}^{\star H}{\pmb \Lambda}_{N_{\rm RF}}\bW_{\rm RF_2}^{\star} \right| \\ 
    \nonumber
    &= \sum_{k=1}^{N_u} \log_2\left(1+\frac{\alpha_b\rho N_{\rm RF} \lambda_k}{N_{\rm RF}+(1-\alpha_b)\rho \sum_{i=1}^{N_u}\lambda_i} \right) \\ 
    \label{eq:achievable_rate_DFT_U}
    &= \sum_{k=1}^{N_u} \log_2\left(1+\frac{\alpha_b\rho N_{\rm RF} \lambda_k/N_r}{\kappa+(1-\alpha_b)\rho \sum_{i=1}^{N_u}\lambda_i/N_r} \right) \\ 
    \nonumber
    &\overset{(b)}{\sim} N_u \log_2 N_{\rm RF},~\text{as}~N_{\rm RF}\to\infty.
\end{align}
Here, $\!(a)$ is from that all diagonal entries of $\bW_{\rm RF_2}^{\star H}{\pmb \Lambda}_{N_{\rm RF}}\!\bW_{\rm RF_2}^{\star}$ are the same as $d_j=\frac{\sum_{i=1}^{N_u}\lambda_i}{N_{\rm RF}}$, for $j=1,\cdots,N_{\rm RF}$ because of the constant modulus property of $\bW_{\rm RF_2}^{\star}$; $(b)$ follows from the fact that as $N_{\rm RF}\to\infty$, i.e., as $N_r \to \infty$, we have  $\frac{1}{N_r}\bH^H\bH \to {\rm diag}\{\frac{1}{L_1}\sum_{\ell=1}^{L_1}|g_{\ell,1}|^2,\cdots, \frac{1}{L_{N_u}}\sum_{\ell=1}^{L_{N_u}}|g_{\ell,N_u}|^2\}$ \cite{ngo2014aspects} by the channel model in \eqref{eq:channel_geo} without the pathloss component and the law of large numbers, which implies
\begin{align} \nonumber
    \frac{\lambda_i}{N_r} \to  \frac{1}{L_i}\sum_{\ell=1}^{L_i}|g_{\ell, i}|^2< \infty,~\text{for}~i=1,\cdots,N_u.
\end{align}
This completes the proof of Theorem \ref{thm:optimality_two_stage}. 
\end{proof}

We note from \eqref{eq:c_w1w2} that $\bW_{\rm RF_1}^\star$ of the two-stage analog combining solution $\bW_{\rm RF}^\star$ aggregates all channel gains into the smaller dimension and provides ($N_{\rm RF} - N_u$) extra dimensions.
Then, as observed in \eqref{eq:W2_spread}, $\bW_{\rm RF_2}^\star$ spreads the aggregated channels gains over all $N_{\rm RF}$ dimensions, which reduces the quantization error by exploiting the extra dimensions.
Accordingly, as the number of RF chains $N_{\rm RF}$ increases, the proposed solution $\bW_{\rm RF}^\star=\bW_{\rm RF_1}^\star\bW_{\rm RF_2}^\star$ achieves the optimal scaling law \eqref{eq:C_opt} by reducing the quantization error. 

\begin{corollary}\label{cor:conventional_sol}
      The conventional optimal solution $\bW_{\rm RF}^{\rm cv}= [\bU_{1:N_{u}} ~ \bU_\perp]$ for perfect quantization systems cannot achieve the optimal scaling law \eqref{eq:C_opt} in coarse quantization systems, and it is upper bounded by
    \begin{align}
        \cC\big(\bW_{\rm RF}^{\rm cv}\big)  <  \cC_{\rm svd}^{\rm ub} = N_u \log_2\left(1+\frac{\alpha_b}{1-\alpha_b}\right). \label{eq:bounded_performance}
    \end{align} 
\end{corollary}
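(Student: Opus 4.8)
The plan is to evaluate $\cC(\bW_{\rm RF}^{\rm cv})$ in closed form by reusing the diagonalized mutual-information expression from the proof of Theorem~\ref{thm:optimality_two_stage}, and then observe that, because the conventional combiner performs no spreading of the aggregated channel gains, its per-stream terms saturate rather than grow with $N_{\rm RF}$. First I would note that the columns of $\bW_{\rm RF}^{\rm cv} = [\bU_{1:N_u}~\bU_\perp]$ are orthonormal eigenvectors of $\bH\bH^H$, with $\bU_\perp$ spanning a subspace of its null space, so
\[
\bW_{\rm RF}^{{\rm cv}H}\bH\bH^H\bW_{\rm RF}^{\rm cv} = \pmb\Lambda_{N_{\rm RF}} = {\rm diag}\{\lambda_1,\dots,\lambda_{N_u},0,\dots,0\}.
\]
This is exactly the matrix $\bC$ appearing in \eqref{eq:c_w1w2}, but specialized to the case of no second-stage combining ($\bW_{\rm RF_2} = \bI$), since \eqref{eq:c_w1w2} holds for any semi-unitary $\bW_{\rm RF}$ with $\bC = \bW_{\rm RF}^H\bH\bH^H\bW_{\rm RF}$.

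Substituting $\bC = \pmb\Lambda_{N_{\rm RF}}$ into \eqref{eq:c_w1w2}, every matrix inside the determinant is diagonal, so the log-determinant decomposes as
\[
\cC(\bW_{\rm RF}^{\rm cv}) = \sum_{k=1}^{N_u}\log_2\!\left(1 + \frac{\alpha_b}{\beta_b}\cdot\frac{\lambda_k}{\lambda_k + \tfrac{1}{\beta_b\rho}}\right).
\]
Because $\beta_b > 0$ in a coarse-quantization system, the quantity $\tfrac{1}{\beta_b\rho}$ is strictly positive, so $\lambda_k/(\lambda_k + \tfrac{1}{\beta_b\rho}) < 1$ for each $k$, whence $\cC(\bW_{\rm RF}^{\rm cv}) < N_u\log_2(1 + \alpha_b/\beta_b)$. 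Using $\beta_b = 1 - \alpha_b$ gives the claimed bound $\cC_{\rm svd}^{\rm ub} = N_u\log_2\!\big(1 + \alpha_b/(1-\alpha_b)\big)$. Finally, since $\cC_{\rm svd}^{\rm ub}$ is a finite constant independent of $N_{\rm RF}$, we get $\cC(\bW_{\rm RF}^{\rm cv})/(N_u\log_2 N_{\rm RF}) \to 0 \neq 1$, so $\bW_{\rm RF}^{\rm cv}$ cannot attain the optimal scaling law \eqref{eq:C_opt}.

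There is no genuine obstacle in this argument; the only point worth emphasizing is where it departs from the proof of Theorem~\ref{thm:optimality_two_stage}. For $\bW_{\rm RF}^{\rm cv}$ the matrix $\bC = \pmb\Lambda_{N_{\rm RF}}$ is diagonal with unequal entries, so the constant-modulus averaging step $(a)$ used to obtain \eqref{eq:W2_spread} is not available: the quantization penalty factor $\lambda_k/(\lambda_k + \tfrac{1}{\beta_b\rho})$ cannot be pushed toward $1$ by enlarging $N_{\rm RF}$, which is precisely the mechanism the spreading combiner $\bW_{\rm RF_2}^\star$ exploits. Thus the corollary is essentially the formal statement that, without spreading, the extra $N_{\rm RF} - N_u$ dimensions are wasted and the mutual information stays bounded regardless of how many RF chains are added.
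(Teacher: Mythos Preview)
Your proof is correct and follows essentially the same route as the paper: specialize \eqref{eq:c_w1w2} to $\bW_{\rm RF_2}=\bI$ so that $\bC=\pmb\Lambda_{N_{\rm RF}}$ is diagonal, read off the per-stream expression $\sum_{k=1}^{N_u}\log_2\!\big(1+\alpha_b\lambda_k/(\beta_b\lambda_k+1/\rho)\big)$, and bound each term by $\log_2(1+\alpha_b/\beta_b)$ since $1/\rho>0$. Your added remark that the resulting bound is independent of $N_{\rm RF}$, hence the scaling law fails, and the informal explanation of why the spreading step $(a)$ toward \eqref{eq:W2_spread} is unavailable here, are both fine supplements but not new ingredients.
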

\begin{proof}
     From \eqref{eq:c_w1w2}, we have the following MI by setting $\bW_{\rm RF_2} = \bI$: 
     \begin{align}
        \nonumber
         &\cC\big(\bW_{\rm RF}^{\rm cv}\big)= \log_2\left|\bI+ \frac{\alpha_b}{\beta_b} {\rm diag}^{-1}\left\{{\pmb \Lambda}_{N_{\rm RF}}+\tfrac{1}{\beta_b\rho}\bI\right\}{\pmb \Lambda}_{N_{\rm RF}} \right|\\
         \nonumber
         &= \sum_{i=1}^{N_u} \log_2\left(1+\frac{\alpha_b\lambda_i}{\beta_b\lambda_i + {1}/{\rho}}\right)\stackrel{(a)}< N_u \log_2\left(1+\frac{\alpha_b}{\beta_b}\right).
    \end{align}
    where $(a)$ comes from $\rho > 0 $.
\end{proof} 
Corollary \ref{cor:conventional_sol} shows that the conventional optimal analog combiner $\bW_{\rm RF}^{\rm cv}$ can capture all channel gains but the MI does not scale as that of  $\bW_{\rm RF}^\star=\bW_{\rm RF_1}^\star\bW_{\rm RF_2}^\star$.
Since all channel gains after processed through $\bW_{\rm RF}^{\rm cv}$ are concentrated on only $N_u$ RF chains out of $N_{\rm RF}$ RF chains, using $\bW_{\rm RF}^{\rm cv}$ results in severe quantization errors at each of the $N_u$ RF chains. 
Although the channel gains $\{\lambda_i\}$ increase as $N_r$ increases, the quantization errors also increase in proportion to the channel gains for $ \cC\big(\bW_{\rm RF}^{\rm cv}\big)$, yielding only the bounded MI in \eqref{eq:bounded_performance}. 

Again, unlike the conventional solution, the additional second stage analog combiner $\bW_{\rm RF_2}^\star$ proposed in Theorem \ref{thm:optimality_two_stage} spreads the channel gains captured by the first stage combiner $\bW_{\rm RF_1}^\star$ to all $N_{\rm RF}$ RF chains evenly, leading to achieving the optimal scaling law by greatly alleviating quantization errors. 
Intuitively, adopting the second combiner $\bW_{\rm RF_2}^\star$ results in distributing the burden of ADCs confined in few RF chains over all available ADCs of the total RF chains. 
Later, we show that such performance gain from adopting the two-stage analog combining structure can be significant even with a reasonable number of RF chains.  

\begin{theorem}\label{thm:equal_eigenvalue_optimal}
    For the case of homogeneous singular values of $\bH^H\bH$ where all singular values $\{\lambda_i\}$ are equal, the two-stage analog combining solution  $\bW_{\rm RF}^\star =\bW_{\rm RF_1}^\star \bW_{\rm RF_2}^\star$ in Theorem \ref{thm:optimality_two_stage} maximizes the MI  in \eqref{eq:P1} with finite $N_{\rm RF}$, i.e.,  
    \begin{gather*}
    	\bW_{\rm RF}^\star = \arg\max_{\bW_{\rm RF}}\cC(\bW_{\rm RF})\\
    	\text{s.t. } \bW_{\rm RF}^H\bW_{\rm RF}=\bI_{N_{\rm RF}} ~\text{and }\lambda_1=\cdots=\lambda_{N_u} = \lambda, 
    \end{gather*}
    and the corresponding optimal MI is given as
    \begin{align}\label{eq:the_optimal_rate_AQNM}
		 \cC_{\rm opt} \! \triangleq \!\cC(\bW_{\rm RF}^\star) \! = \!N_u \!\log_2\!\!\left(\!1\!+\!\frac{\alpha_b\lambda N_{\rm RF}}{\lambda N_u\!(1\!-\!\alpha_b) \!+\!{N_{\rm RF}}/{\rho}} \!\right)\!.
    \end{align}
\end{theorem}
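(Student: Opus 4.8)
The plan is to reuse the reduction already carried out in the proof of Theorem~\ref{thm:optimality_two_stage} and simply sharpen the chain of inequalities \eqref{eq:upper_MI_bound} in the homogeneous case, rather than restarting the analysis. First I would specialize the quantities in \eqref{eq:wHHw_change}: when $\lambda_1=\cdots=\lambda_{N_u}=\lambda$, the columns of $\bU_{||}$ lie in ${\rm Span}(\bu_1,\dots,\bu_{N_u})$, so $\bU_{1:N_u}\bU_{1:N_u}^H\bU_{||}=\bU_{||}$, hence the nonzero block of $\bQ$ equals $\bU_{||}^H\bU_{1:N_u}{\pmb \Lambda}_{N_u}\bU_{1:N_u}^H\bU_{||}=\lambda\,\bI_m$. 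Therefore $\bar{\pmb \Lambda}={\rm diag}\{\lambda,\dots,\lambda,0,\dots,0\}$ with exactly $m$ nonzero entries, so $\bG_{\rm sub}$ is $\sqrt{\lambda}$ times the first $m$ columns of a unitary matrix; consequently $\|[\bG_{\rm sub}]_{i,:}\|^2=\lambda c_i$ where $c_i\in[0,1]$ and $\sum_{i=1}^{N_{\rm RF}}c_i=m$.

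Next I would add one more step to \eqref{eq:upper_MI_bound}. The map $t\mapsto t/\big(t+\tfrac{1}{\beta_b\rho}\big)$ is concave on $t\ge0$, so Jensen's inequality over the $N_{\rm RF}$ terms, whose weights $c_i$ sum to $m$, gives $\sum_{i=1}^{N_{\rm RF}}\frac{\lambda c_i}{\lambda c_i+\frac{1}{\beta_b\rho}}\le\frac{\lambda m N_{\rm RF}}{\lambda m+N_{\rm RF}/(\beta_b\rho)}$. Substituting this into the last line of \eqref{eq:upper_MI_bound} and using $\alpha_b+\beta_b=1$ yields, for every feasible semi-unitary $\bW_{\rm RF}$ with $m\ge1$ columns in the signal subspace, $\cC(\bW_{\rm RF})\le m\log_2\!\big(1+\frac{\alpha_b\lambda N_{\rm RF}}{\beta_b\lambda m+N_{\rm RF}/\rho}\big)$ (the case $m=0$ gives $\cC(\bW_{\rm RF})=0$ and is not optimal). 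It then remains to maximize $f(m)=m\log_2\!\big(1+\frac{A}{Bm+C}\big)$ over $m\in\{1,\dots,N_u\}$ for constants $A,B,C>0$: writing $f$ through the natural logarithm and differentiating gives $\ln 2\cdot f'(m)=\ln\frac{Bm+C+A}{Bm+C}-\frac{mAB}{(Bm+C+A)(Bm+C)}\ge\ln\frac{Bm+C+A}{Bm+C}-\frac{A}{Bm+C+A}\ge0$, where the first inequality uses $mB\le Bm+C$ and the second is the elementary bound $\ln x\le x-1$ with $x=\frac{Bm+C}{Bm+C+A}$. Hence $f$ is increasing, the maximum is at $m=N_u$, and $\cC(\bW_{\rm RF})\le N_u\log_2\!\big(1+\frac{\alpha_b\lambda N_{\rm RF}}{\lambda N_u(1-\alpha_b)+N_{\rm RF}/\rho}\big)=\cC_{\rm opt}$ for every feasible $\bW_{\rm RF}$.

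Finally I would check that $\bW_{\rm RF}^\star=\bW_{\rm RF_1}^\star\bW_{\rm RF_2}^\star$ attains this bound. It is feasible since $\bW_{\rm RF_1}^\star$ is semi-unitary and $\bW_{\rm RF_2}^\star$ is unitary, so $\bW_{\rm RF}^{\star H}\bW_{\rm RF}^\star=\bI$. Specializing the exact value of $\cC(\bW_{\rm RF}^\star)$ computed between \eqref{eq:c_w1w2} and \eqref{eq:achievable_rate_DFT_U} in the proof of Theorem~\ref{thm:optimality_two_stage} to $\lambda_k=\lambda$ for all $k$, so that $\sum_{i=1}^{N_u}\lambda_i=N_u\lambda$, gives $\cC(\bW_{\rm RF}^\star)=N_u\log_2\!\big(1+\frac{\alpha_b\rho N_{\rm RF}\lambda}{N_{\rm RF}+(1-\alpha_b)\rho N_u\lambda}\big)$, which equals $\cC_{\rm opt}$ after dividing numerator and denominator of the inner fraction by $\rho$. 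Since $\bW_{\rm RF}^\star$ meets the universal upper bound with equality, it is a maximizer and $\cC_{\rm opt}$ in \eqref{eq:the_optimal_rate_AQNM} is the optimal value.

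I expect the main obstacle to be the monotonicity step: establishing that $m\log_2\!\big(1+\frac{A}{Bm+C}\big)$ is increasing in $m$ is not immediate, since raising $m$ simultaneously scales up the prefactor and shrinks the logarithm's argument, and this is the one place a genuine (though short) analytic argument via $\ln x\le x-1$ is needed. A secondary point to handle with care is that I do not track the equality conditions in the two Jensen steps; optimality of $\bW_{\rm RF}^\star$ instead follows from the separate direct computation $\cC(\bW_{\rm RF}^\star)=\cC_{\rm opt}$ coinciding with the bound.
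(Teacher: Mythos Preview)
Your proposal is correct and follows essentially the same route as the paper's proof: reduce to the representation in \eqref{eq:wHHw_change}--\eqref{eq:upper_MI_bound}, use concavity of $t\mapsto t/(t+c)$ to bound the sum, maximize over $m$, and verify that $\bW_{\rm RF}^\star$ attains the bound via the computation in \eqref{eq:c_w1w2}--\eqref{eq:achievable_rate_DFT_U}. Your argument is in fact slightly cleaner in two places: you observe directly that $\bU_{1:N_u}\bU_{1:N_u}^H\bU_{||}=\bU_{||}$ forces the nonzero block of $\bQ$ to equal $\lambda\bI_m$ (the paper instead bounds $\bar\lambda_i\le\lambda$ and then replaces $\bG_{\rm sub}$ by a dominating $\bG_{\rm sub}^\star$), and you supply an explicit monotonicity proof for $m\mapsto m\log_2(1+\frac{A}{Bm+C})$ via $\ln x\le x-1$, whereas the paper just asserts the derivative is positive.
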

\begin{proof}
    Recall $\bG=\overline{\bW}_{\rm RF}^H\bar{\pmb \Lambda}^{1/2}=[\bG_{\rm sub}~{\bf 0}]$ in the proof of Theorem \ref{thm:optimality_two_stage}, where $\bG_{\rm sub}$ is the $N_{\rm RF} \times m$ submatrix of $\bG$ and $\bar{\pmb \Lambda}={\rm diag}\{\bar{\lambda}_1,\cdots,\bar{\lambda}_m,0,\cdots,0\}$ is the diagonal matrix composed of the singular values of $\bQ$, defined in \eqref{eq:wHHw_change}. 
    From the assumption of $\lambda_1=\cdots=\lambda_{N_u} = \lambda $,  we have 
    \begin{align}\nonumber
		\max_{\bx\in\mathbb{C}^{N_{\rm RF}}:\|\bx\|=1}\bx^H \bQ \bx &= \max_{\by\in\mathbb{C}^{m}:\|\by\|=1}\lambda \|\bU_{1:N_u}^H\bU_{||}\by\|^2 \\ \nonumber
        &\overset{(a)}{\le} \max_{\by\in\mathbb{C}^{m}:\|\by\|=1}\lambda \|\bU_{1:N_u}^H\|^2\|\bU_{||}\|^2\|\by\|^2 \\ \nonumber
        &=\lambda,
    \end{align}
    where 
	$(a)$ comes from the sub-multiplicativity of the norm, and the last equality holds by $\|\bU_{1:N_u}^H\|=1$ and $\|\bU_{||}\|=1$. This implies the singular values of $\bQ$ are bounded as $\bar{\lambda}_i \le \lambda$ for $i=1,\cdots,m$. Hence, 
    $\|[\bG_{\rm sub}]_{j,:}\|^2$ is maximized for any given $\overline{\bW}_{\rm RF}$ when $\bar{\lambda}_i$ achieves $\lambda$ for all $i=1,\cdots,m$. 
    
    We consider the upper bound of $\cC(\bW_{\rm RF})$ in \eqref{eq:upper_MI_bound} and define  
    \begin{align} \nonumber
        \bG^\star_{\rm sub}=\overline{\bW}_{\rm RF}^H\left[
        \begin{matrix}
        \sqrt{\lambda} \bI_m \\ 
        {\bf 0}
        \end{matrix} 
        \right].
    \end{align}
    Then, \eqref{eq:upper_MI_bound} is further upper bounded as 
    \begin{align} \nonumber
        \cC(\bW_{\rm RF}) &\le m \log_2\left(1+\frac{\alpha_b}{\beta_b m}\sum_{i=1}^{N_{\rm RF}} \frac{\|[\bG^\star_{\rm sub}]_{i,:}\|^2}{\|[\bG^\star_{\rm sub}]_{i,:}\|^2+\frac{1}{\beta_b\rho}} \right) \\ \nonumber
        &\overset{(a)}{\le} m \log_2\left(\!1+\frac{\alpha_b N_{\rm RF}}{\beta_b m} \frac{\sum_{i=1}^{N_{\rm RF}}\|[\bG^\star_{\rm sub}]_{i,:}\|^2}{\sum_{i=1}^{N_{\rm RF}}\|[\bG^\star_{\rm sub}]_{i,:}\|^2+\frac{N_{\rm RF}}{\beta_b\rho}} \! \right) \\
        &\overset{(b)}{=}m\log_2\left(1+ \frac{\alpha_b \lambda N_{\rm RF} }{\lambda m {\beta_b}+ {N_{\rm RF}}/{\rho}} \right), \label{eq:upper_MI_equal_eigen}
    \end{align}
    where $(a)$ holds by Jensen's inequality and the concavity of $\frac{x}{x+1}$ for $x>0$; and $(b)$ comes from $ \sum_{i=1}^{N_{\rm RF}}\|[\bG^\star_{\rm sub}]_{i,:}\|^2 = \|\bG^\star_{\rm sub}\|_F^2 = \lambda m.$
     Note that \eqref{eq:upper_MI_equal_eigen} is maximized when $m=N_u$ since the derivative of \eqref{eq:upper_MI_equal_eigen} with respect to $m$ is positive for $m>0$ for any given $\alpha_b,\lambda,\rho,N_{\rm RF}>0$. By substituting $\lambda_1=\cdots=\lambda_{N_u} = \lambda$ into \eqref{eq:achievable_rate_DFT_U}, it can be shown that the upper bound of $\cC(\bW_{\rm RF})$ in \eqref{eq:upper_MI_equal_eigen} with $m=N_u$ can be achieved by adopting $\bW_{\rm RF}^{\star}=\bW_{\rm RF_1}^{\star}\bW_{\rm RF_2}^{\star}$.
     This completes the proof of Theorem \ref{thm:equal_eigenvalue_optimal}. 
\end{proof}

Theorem \ref{thm:equal_eigenvalue_optimal} shows the optimality of the proposed two-stage analog combining solution $\bW_{\rm RF}^\star =\bW_{\rm RF_1}^\star\bW_{\rm RF_2}^\star$ in maximizing the MI for any number of RF chains $N_{\rm RF} \geq N_u$ with homogeneous singular values.
We note that such optimality of $\bW_{\rm RF}^\star$  can be nearly achieved for a fixed number of users in large-scale MIMO systems as shown in Remark \ref{rm:homo_singular}.
\begin{remark}\label{rm:homo_singular}
    From Theorem \ref{thm:equal_eigenvalue_optimal}, the two-stage analog combining solution $\bW_{\rm RF}^{\star}=\bW_{\rm RF_1}^{\star}\bW_{\rm RF_2}^{\star}$ in Theorem \ref{thm:optimality_two_stage} maximizes the MI for $\cP1$ as well as achieves the optimal scaling law \eqref{eq:C_opt} in homogeneous massive MIMO networks with a  large number of antennas $N_r$, where each channel elemen $h_{ij}\overset{i.i.d.}{\sim}\cC\cN(0,1)$.
    This is because as the number of receive antennas $N_r$ increases, $\frac{1}{N_r}\bH^H\bH \to \bI_{N_u}$, i.e, $\frac{1}{N_r}\lambda_i \to 1$, $\forall i$ \cite{Marzetta10TWC}.
\end{remark}

Figure \ref{fig:theorem_2_remark_1} shows the simulation results of the MI of the proposed two-stage analog combiner $\bW_{\rm RF}^\star = \bW_{\rm RF_1}^\star \bW_{\rm RF_2}^\star$ in Theorem \ref{thm:optimality_two_stage} and the conventional analog combiner $\bW_{\rm RF}^{\rm cv}$ in Corollary \ref{cor:conventional_sol} which is optimal for infinite-resolution ADC systems.
Here, we use $\bW_{\rm RF_1}^\star=\bW_{\rm RF}^{\rm cv}=\bU_{1:N_{\rm RF}}$ and $\bW_{\rm RF_2}^\star=\bW_{\rm DFT}$, where $\bW_{\rm DFT}$ is an $N_{\rm RF} \times N_{\rm RF}$ normalized DFT matrix, and consider Rayleigh MIMO channels described in Remark \ref{rm:homo_singular}. 
As shown in Fig.  \ref{fig:theorem_2_remark_1}(a), the MI of the proposed two-stage analog combiner almost achieves the optimal MI $\cC_{\rm opt}$ \eqref{eq:the_optimal_rate_AQNM} in Theorem \ref{thm:equal_eigenvalue_optimal} with $\lambda/N_r = 1$ even in the regime of a finite $N_r$. 
We further note that compared with the conventional one-stage combiner $\bW_{\rm RF}^{\rm cv}$ converging to the upper limit $\cC_{\rm svd}^{ub}$, the MI of the two-stage analog combiner logarithmically increases without a limit as $N_r$ increases with $\kappa \approx 1/3$. This follows the optimal scaling law in Theorem \ref{thm:optimality_two_stage}. 

Fig. \ref{fig:theorem_2_remark_1}(b) shows the MI simulation results with respect to the SNR $\rho$. 
The two-stage combiner $\bW_{\rm RF}^\star= \bW_{\rm RF_1}^\star \bW_{\rm RF_2}^\star$ yields superior MI performance to that of $\bW_{\rm RF}^{\rm cv}$, and the MI of $\bW_{\rm RF}^\star$ converges to  $N_u\log_2\left(1+\frac{\alpha_b N_{\rm RF}}{(1-\alpha_b)N_u}\right)$, which is obtained from $\cC_{\rm opt}$ \eqref{eq:the_optimal_rate_AQNM} with  $\rho\to\infty$. 
Therefore, the MI gap between the upper limits of the two combiners $(\bW_{\rm RF}^\star, \bW_{\rm RF}^{\rm cv})$ is 
\begin{align}
    \Delta \!=\! N_u\!\left(\log_2\!\left(\!1\!+\!\frac{\alpha_b N_{\rm RF}}{(1-\alpha_b)N_u}\!\right)\! -\! \log_2\!\left(\!1\!+\!\frac{\alpha_b}{1-\alpha_b}\!\right)\!\right).
\end{align}
Since $N_{\rm RF} \geq N_u$ is considered in this paper, the proposed two-stage combiner $\bW_{\rm RF}^\star$ always yields the higher upper limit of the MI than the SVD-based one-stage combiner $\bW_{\rm RF}^{\rm cv}$.

\begin{figure}[t]
\centering
$\begin{array}{c}
{\resizebox{0.9\columnwidth}{!}
{\includegraphics{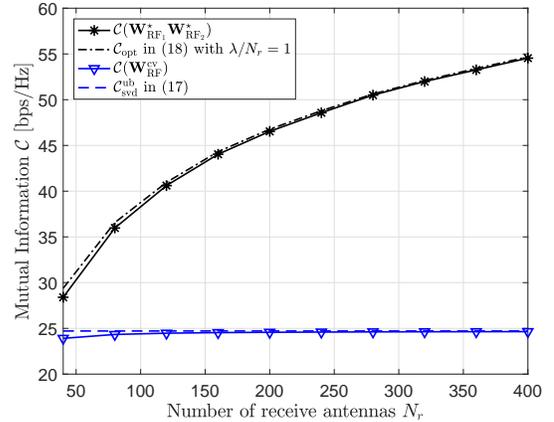}}
}\\ \mbox{\small (a) $\cC$ vs. $N_r$} \\ 
{\resizebox{0.9\columnwidth}{!}
{\includegraphics{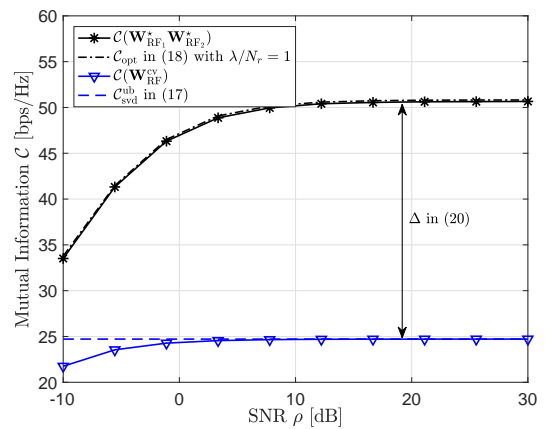}}
}\\ \mbox{\small (b) $\cC$ vs. $\rho$}
\end{array}$
\caption{
The simulation results of the MI with the proposed two-stage analog combining solution $\bW_{\rm RF_1}^{\star}\bW_{\rm RF_2}^{\star}$ and the conventional optimal analog combiner $\bW_{\rm RF}^{\rm cv}$ in the Rayleigh MIMO channels: (a) for $(\rho,N_{\rm RF},N_u,b)=(5~{\rm dB},\lceil \frac{N_r}{3}\rceil,8,2)$ as $N_r$ increases, and (b) for $(N_r,N_{\rm RF},N_u,2)=(256,\lceil \frac{N_r}{3}\rceil,8,2)$ as $\rho$ increases.}
\label{fig:theorem_2_remark_1}
\end{figure}
\section{Two-Stage Analog Combining Algorithm}
\label{sec:design}

In the previous section, we derived the analog combining solution for the unconstrained problem $\cP1$.
However, the constant modulus constraint on each matrix element should be taken into account in designing analog combiners since it is implemented using phase shifters.
We further consider a pre-defined set of phases with a finite cardinality for phase shifters.
Considering channels known at the receiver, we propose a codebook-based two-stage analog combining algorithm for mmWave communications.

\subsection{Proposed Two-Stage Analog Combining Algorithm}

Theorem \ref{thm:optimality_two_stage} provides a practical analog combiner structure that is implementable with a two-stage analog combiner $\bW_{\rm RF} = \bW_{\rm RF_1}\bW_{\rm RF_2}$:
the first analog combiner and the second analog combiner can be considered as a channel gain aggregation matrix and spreading matrix, respectively.
Leveraging such insight and the finding in the following Corollary \ref{cor:AoA_DFT}, 
we propose an ARV-based two-stage analog combining (ARV-TSAC) algorithm for mmWave channels.
\begin{corollary}\label{cor:AoA_DFT}
      When the number of channel paths $L_k$ is limited, the optimal scaling in \eqref{eq:C_opt} can be achieved by using $\tilde\bW_{\rm RF}^\star=\bW_{\rm AoA}\bW_{\rm RF_2}^\star$ as $N_{r} \to \infty$ for fixed $\kappa \in (0,1)$, where $\bW_{\rm AoA}=[\bA_{\rm AoA}, \bA_{\rm AoA}^\perp]$, $\bA_{\rm AoA}=[\ba(\phi_{1,1}),\ba(\phi_{2,1}),\cdots,\ba(\phi_{L_{N_u},N_u})]$, and $\bA_{\rm AoA}^\perp$ is an $N_r \times (N_{\rm RF}-\sum_{k=1}^{N_u}L_k)$ matrix composed of orthonormal basis vectors whose column space is in ${\rm Span}^\perp(\bA_{\rm AoA})$.
\end{corollary}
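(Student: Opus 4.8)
The plan is to show that, as $N_r\to\infty$, $\bW_{\rm AoA}$ plays the role of the channel-gain aggregation stage $\bW_{\rm RF_1}^\star$ of Theorem \ref{thm:optimality_two_stage}, so that the spreading argument of that proof carries the MI of $\tilde\bW_{\rm RF}^\star$ up to $N_u\log_2 N_{\rm RF}$. First I would invoke the standard ULA fact that, for distinct spatial angles, $\ba(\theta)^H\ba(\theta')=\frac{1}{N_r}\sum_{n=0}^{N_r-1}e^{-jn\pi(\vartheta'-\vartheta)}\to 0$ as $N_r\to\infty$ while $\|\ba(\theta)\|=1$. The $\sum_k L_k$ angles $\{\phi_{\ell,k}\}$ are almost surely distinct, and with $N_{\rm RF}=\kappa N_r$ we have $N_{\rm RF}\ge\sum_k L_k$ for $N_r$ large (so that $\bW_{\rm AoA}$ is well defined), whence $\bA_{\rm AoA}^H\bA_{\rm AoA}\to\bI$ and $\bW_{\rm AoA}^H\bW_{\rm AoA}\to\bI_{N_{\rm RF}}$; i.e., $\bW_{\rm AoA}$ (equivalently, its Gram--Schmidt orthonormalization, which converges to it) is asymptotically semi-unitary, so up to terms vanishing as $N_r\to\infty$ the quantization terms of \eqref{eq:MI} simplify exactly as in the passage from \eqref{eq:MI} to \eqref{eq:c_wrf_wlambdaw}.

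Next, from the channel model \eqref{eq:channel_geo} I would write $\bH=\bA_{\rm AoA}\bG$, where $\bG$ is the $(\sum_k L_k)\times N_u$ matrix whose $k$th column carries $\{\sqrt{N_r/L_k}\,g_{\ell,k}\}_{\ell=1}^{L_k}$ on the rows indexing user $k$'s paths and zero elsewhere. Hence ${\rm Span}(\bh_1,\dots,\bh_{N_u})\subseteq{\rm Span}(\bA_{\rm AoA})$, which lies in the column space of $\bW_{\rm AoA}$, so $\bW_{\rm AoA}$ captures all channel gains: together with the first step, $\bW_{\rm AoA}^H\bH\bH^H\bW_{\rm AoA}\to{\rm blkdiag}\{\bG\bG^H,{\bf 0}\}$, whose nonzero eigenvalues are exactly the $N_u$ nonzero eigenvalues $\lambda_1,\dots,\lambda_{N_u}$ of $\bH\bH^H$. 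Writing $\bN\triangleq\bW_{\rm RF_2}^{\star H}\bW_{\rm AoA}^H\bH\bH^H\bW_{\rm AoA}\bW_{\rm RF_2}^\star$, a rank-$N_u$ positive semidefinite matrix with eigenvalues $\lambda_1,\dots,\lambda_{N_u}$, the MI becomes, exactly as in the step leading to \eqref{eq:c_w1w2}, $\cC(\tilde\bW_{\rm RF}^\star)=\log_2\big|\bI+\tfrac{\alpha_b}{\beta_b}\,{\rm diag}^{-1}\{\bN+\tfrac{1}{\beta_b\rho}\bI\}\,\bN\big|$.

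The main obstacle is that, unlike in Theorem \ref{thm:optimality_two_stage}, the columns of $\bA_{\rm AoA}$ are array responses rather than the singular vectors $\bU_{1:N_u}$, so when some $L_k>1$ the matrix $\bW_{\rm AoA}^H\bH\bH^H\bW_{\rm AoA}$ is only block-diagonal and the constant-modulus property of $\bW_{\rm RF_2}^\star$ no longer makes the diagonal of $\bN$ perfectly flat as in step $(a)$ of \eqref{eq:W2_spread}. I would get around this with a uniform bound on that diagonal: for the $j$th column $\bw_j$ of $\bW_{\rm RF_2}^\star$ (entries of modulus $1/\sqrt{N_{\rm RF}}$), $[\bN]_{jj}=\|\bH^H\bW_{\rm AoA}\bw_j\|^2=\sum_{k=1}^{N_u}|\bh_k^H\bW_{\rm AoA}\bw_j|^2$, and since each $\ba(\phi_{\ell,k})$ is a column of the (asymptotically orthonormal) $\bW_{\rm AoA}$, $|\bh_k^H\bW_{\rm AoA}\bw_j|\le\sqrt{N_r/L_k}\,\tfrac{1+o(1)}{\sqrt{N_{\rm RF}}}\sum_{\ell=1}^{L_k}|g_{\ell,k}|\le\tfrac{1+o(1)}{\sqrt\kappa}\,\|\bg_k\|$ with $\bg_k=[g_{1,k},\dots,g_{L_k,k}]^T$, so $[\bN]_{jj}\le d_{\max}:=\tfrac{1+o(1)}{\kappa}\sum_{k,\ell}|g_{\ell,k}|^2$, a finite $N_r$-independent constant. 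Consequently $(d_{\max}+\tfrac{1}{\beta_b\rho})^{-1}\bI\preceq{\rm diag}^{-1}\{\bN+\tfrac{1}{\beta_b\rho}\bI\}\preceq\beta_b\rho\,\bI$, and rewriting $\cC(\tilde\bW_{\rm RF}^\star)=\log_2\big|\bI+\tfrac{\alpha_b}{\beta_b}\bN^{1/2}{\rm diag}^{-1}\{\bN+\tfrac{1}{\beta_b\rho}\bI\}\bN^{1/2}\big|$ (via $|\bI+\bX\bY|=|\bI+\bY\bX|$) and using monotonicity of $\log\det$ on positive semidefinite matrices yields, for some constant $c'=\Theta(1)>0$, the sandwich $\sum_{i=1}^{N_u}\log_2(1+c'\lambda_i)\le\cC(\tilde\bW_{\rm RF}^\star)\le\sum_{i=1}^{N_u}\log_2(1+\alpha_b\rho\,\lambda_i)$. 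Since $\lambda_i/N_r\to\tfrac{1}{L_i}\sum_{\ell}|g_{\ell,i}|^2\in(0,\infty)$ (as already used in the proof of Theorem \ref{thm:optimality_two_stage}) and $N_{\rm RF}=\kappa N_r$, both ends scale as $N_u\log_2 N_{\rm RF}$, so $\cC(\tilde\bW_{\rm RF}^\star)\sim N_u\log_2 N_{\rm RF}$, which by \eqref{eq:C_opt} is the optimal scaling. The rest is routine $N_r\to\infty$ bookkeeping ($\bA_{\rm AoA}^H\bA_{\rm AoA}\to\bI$, $\bW_{\rm AoA}^H\bW_{\rm AoA}\to\bI$ in the quantization terms of \eqref{eq:MI}, and tracking the $o(1)$'s), none of which is delicate.
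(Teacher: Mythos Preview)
Your proposal is correct and follows essentially the same route as the paper's proof: decompose $\bH=\bA_{\rm AoA}\bG$, invoke the asymptotic orthogonality $\bW_{\rm AoA}^H\bW_{\rm AoA}\to\bI$ to reduce $\cC(\tilde\bW_{\rm RF}^\star)$ to the form $\log_2\big|\bI+\tfrac{\alpha_b}{\beta_b}\,{\rm diag}^{-1}\{\bN+\tfrac{1}{\beta_b\rho}\bI\}\bN\big|$, uniformly bound the diagonal entries of $\bN$ by a finite $N_r$-independent constant (the paper's $c_1=\tfrac{1}{\kappa}\sum_k\tfrac{1}{L_k}(\sum_\ell|g_{\ell,k}|)^2$ is your $d_{\max}$ before the Cauchy--Schwarz step), and then lower-bound the MI by replacing ${\rm diag}^{-1}\{\cdot\}$ with the scalar $(d_{\max}+\tfrac{1}{\beta_b\rho})^{-1}$, after which $\lambda_i/N_r\to\text{const}$ gives the $N_u\log_2 N_{\rm RF}$ scaling. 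The only minor difference is presentational: the paper establishes just the lower bound and appeals to the optimal scaling law of Theorem~\ref{thm:optimality_two_stage} for the matching upper bound, whereas you make the two-sided sandwich explicit via ${\rm diag}^{-1}\{\cdot\}\preceq\beta_b\rho\,\bI$ and the Sylvester identity $|\bI+\bX\bY|=|\bI+\bY\bX|$.
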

\begin{proof}
  See Appendix \ref{appx:SVDtoARV}.
\end{proof}
According to Corollary \ref{cor:AoA_DFT}, using ARVs provides a fair tradeoff between practicality in implementaion and performance.
To design the first analog combiner $\bW_{\rm RF_1}$,
we adopt an ARV-codebook based maximum channel gain aggregation approach  
to collect most channel gains into the lower signal dimension by exploiting the sparse nature of mmWave channels. 
We set the codebook of the evenly spaced spatial angles $\cV = \{\vartheta_1,\dots,\vartheta_{|\mathcal{V}|}\}$.
Since selecting $N_{\rm RF}$ ARVs out of the total $|\cV|$ ARVs in the codebook requires $|\mathcal{V}| \choose N_{\rm RF}$ search complexity for the exhaustive method, we propose a greedy-based algorithm to find the best $N_{\rm RF}$ ARVs with greatly reduced complexity.

Algorithm \ref{algo:ARV} describes the proposed ARV-TSAC method.
In Step (a), the ARV with the spatial angle $\vartheta^\star$ which captures the largest channel gain in the remaining channel dimensions $\bH_{\rm rm}$ is selected and it composes a column of the first analog combiner in Step (b). In Step (c), the channel matrix on the remaining dimensions $\bH_{\rm rm}$ is projected onto the subspace of ${\rm Span}^\perp(\ba(\vartheta^\star))$ to remove the channel gain on the space of the selected ARV.
Algorithm \ref{algo:ARV} repeats these steps until $N_{\rm RF}$ ARVs are selected from the codebook $\cV$.


\begin{remark}\label{rm:DFT}
    We can implement the second-stage analog combiner that satisfies the condition $(ii)$ of Theorem \ref{thm:optimality_two_stage} by adopting a normalized $N_{\rm RF}\times N_{\rm RF}$ DFT matrix, i.e., $\bW_{\rm RF_2}^\star =\bW_{\rm DFT}$.
\end{remark}
Employing the DFT matrix for the second analog combiner $\bW_{\rm RF_2}=\bW_{\rm DFT}$ (or any unitary matrix with constant modulus) offers benefits in reducing implementation complexity and power consumption since $\bW_{\rm DFT}$ does not depend on the channel $\bH$ and can be constructed by using passive (or fixed) analog phase shifters.
Accordingly, although the additional $N_{\rm RF}^2$ fully-connected passive phase shifters for the second analog combiner add to the complexity of the proposed architecture in physical area and power consumption, it can be implemented with very low complexity and power consumption in the practical system.
Furthermore, if $N_{\rm RF}$ is a power of two, the fast Fourier transform version of the DFT calculation can be implemented, which reduces the number of additional passive phase shifters to $N_{\rm RF}\log_2 N_{\rm RF}$.


\begin{algorithm}[t!]
\label{algo:ARV}
 {\bf Initialization}: set $\bW_{\rm RF_1} = $ empty matrix, $\bH_{\rm rm}=\bH$, and $\cV = \{\vartheta_1,\dots, \vartheta_{|\mathcal{V}|}\}$ where $\vartheta_n = \frac{2n}{|\mathcal{V}|}-1$ \\
 \For{$i = 1:N_{\rm RF}$}{
    Maximum channel gain aggregation
 \begin{enumerate}
     \item[(a)] $\ba(\vartheta^\star) = \argmax_{\vartheta \in \mathcal{V}} \|\ba(\vartheta)^H\bH_{\rm rm}\|^2$
     \vspace{0.1em}
     \item[(b)] $\bW_{\rm RF_1} = \big[\ \bW_{\rm RF_1}\ |\  \ba(\vartheta^\star)\ \big] $
     \vspace{0.1em}
     \item[(c)] $\bH_{\rm rm}=\cP_{\ba(\vartheta^\star)}^{\perp}\bH_{\rm rm}$, where $\cP_{\ba(\vartheta)}^{\perp}\!=\!\bI\! -\! \ba(\vartheta)\ba(\vartheta)^H$
     \vspace{0.1em}
     \item[(d)]$\cV = \cV\setminus\{\vartheta^\star\}$
     \vspace{0.1em}
 \end{enumerate}
 }
Set $\bW_{\rm RF_2} = \bW_{\rm DFT}$ where $\bW_{\rm DFT}$ is a normalized $N_{\rm RF} \times N_{\rm RF}$ DFT matrix.\\
\Return{\ }{$\bW_{\rm RF_1}$ and $\bW_{\rm RF_2}$\;}
\caption{ARV-based TSAC}
\end{algorithm}
\subsection{Performance Analysis}

In this subsection, we analyze the ergodic sum rate of the ARV-TSAC algorithm with an MRC baseband combiner.
Once we derive the closed-form ergodic rate, we compare the rate with the one without the second analog combiner $\bW_{\rm RF_2}$ to quantify the ergodic rate gain from employing $\bW_{\rm RF_2}$. 
To this end, we adopt a virtual channel representation \cite{sayeed2002deconstructing} for analytic tractability which captures the sparse property of mmWave channels \cite{mendez2016hybrid,heath2016overview}.
Under the virtual channel representation, the channel vector ${\bf h}_k$ in \eqref{eq:channel_geo} can be modeled as
\begin{align} 
    \nonumber
     {\bf h}_k = \sqrt{\frac{N_r}{L_k}}\bA \tilde\bg_{k} =\bA\tilde\bh_{{\rm b},k}
\end{align}
where $\tilde{\bf h}_{{\rm b},k} = \sqrt{\frac{N_r}{L_k}} \tilde\bg_k$ is the $L_k$-sparse beamspace channel of user $k$, i.e., $\tilde{\bf g}_{k}$ has $L_k$ nonzero entries $\stackrel{i.i.d.}\sim \mathcal{CN}(0,1)$, and $\bA = [\ba(\varphi_1), \dots, \ba(\varphi_{N_r})]$ with uniformly spaced spatial angles $\varphi_i$.
Under this representation, we consider the case where the codebook size of Algorithm \ref{algo:ARV} is equal to the number of antennas $|\cV|= N_r$.
Accordingly, the first analog combiner is the $N_r \times N_{\rm RF}$ submatrix of $\bA$ which captures the most channel gain,  $\bW_{\rm RF_1} = \bA_{\rm sub}$.
We assume that $\bW_{\rm RF_1}$ captures all channel propagation paths from $N_u$ users \cite{kim2015virtual, choi2017resolution}, i.e., $L_k$ channels paths for each user fall within $N_{\rm RF}$ RF chains.
For simplicity, we further assume $L_k = L$, $\forall k$, in the analysis\footnote{The similar results can be derived with minor changes for general $L_k$.}.
Thus, after combining with $\bW_{\rm RF_1}=\bA_{\rm sub}$, the channel becomes $\bH_{\rm b} = \bW_{\rm RF_1}^H\bH$, and the channel vector of user $k$ with the reduced dimension $\bh_{{\rm b},k}\in \bbC^{N_{\rm RF}}$ is 
\begin{align}
	\bh_{{\rm b},k} = \sqrt{\frac{N_r}{L}}\bg_{k}.
\end{align}

We consider $L$ nonzero channel gains to be uniformly distributed within each user channel $\bh_{{\rm b},k}$ and use an indicator function $\mathds{1}_{\{i\in\mathcal{A}\}}$ to characterize the channel sparsity where  $\mathds{1}_{\{i\in \mathcal{A}\}} = 1$ if $i \in \mathcal{A}$, and $\mathds{1}_{\{i\in \mathcal{A}\}} = 0$ otherwise. 
Utilizing $\mathds{1}_{\{\cdot\}}$, we model the $\ell$th complex path gain of user $k$ as
\begin{align}\nonumber
    g_{\ell,k} = \xi_{\ell,k}\mathds{1}_{\{\ell \in \mathcal{P}_k\}},\quad \ell = 1,\cdots, N_{\rm RF}, ~ k = 1,\cdots,N_u
\end{align}
where $\xi_{\ell, k} \overset{i.i.d.}\sim \mathcal{CN}(0,1)$, $\forall \ell,k$ and $\mathcal{P}_k = \big\{i\, \big|\, g_{i,k} \neq 0, i = 1,\cdots, N_{\rm RF}\big\}$ is the nonzero index set.

We consider the MRC combiner $\bW_{\rm BB} = \bar\bH_{\rm b}$ where  $\bar\bH_{\rm b} = \bW_{\rm RF_2}^H\bW_{\rm RF_1}^H\bH$,
and the received signal $k$ in \eqref{eq:z} becomes
\begin{align}
	\nonumber
	z_k =& \alpha_b\sqrt{\rho}\bar\bh_{{\rm b},k}^H\bar\bh_{{\rm b},k}s_k \\ 
	\label{eq:z_k}
	& +\alpha_b\sqrt{\rho}\sum_{i\neq k}^{N_u}\bar\bh_{{\rm b},k}^H \bar\bh_{{\rm b},i}s_i\! +\! \alpha_b \bar\bh_{{\rm b},k}^H\bW_{\rm RF}^H\bn\! +\! \bar\bh_{{\rm b},k}^H\bq.
\end{align}
From \eqref{eq:z_k}, the achievable rate of the proposed system for the MRC combiner with simplification is given as
\begin{align}
    \label{eq:rk_mrc}
	r_k^{\rm mrc} 
	&\!=\! \log_2\left(\!1\! +\! \frac{\rho \alpha_b\|\bar\bh_{{\rm b},k}\|^4}{\rho\alpha_b  \sum_{i \neq k}^{N_u}|\bar\bh_{{\rm b},k}^H\bar\bh_{{\rm b},i}|^2 + \|\bar\bh_{{\rm b},k}\|^2 + \rho\beta_b\Psi_k}\!\right)
\end{align}
where $	\Psi_k = \bar\bh_{{\rm b},k}^H {\rm diag}\big\{\bar\bH_{\rm b}\bar\bH_{\rm b}^H\big\}\bar\bh_{{\rm b},k}$,
and the ergodic rate is
\begin{align}
	\label{eq:E_rk} 
    & \bar r_k^{\rm mrc} = \bbE\Big[r_k^{\rm mrc}\Big] \\
    \nonumber
    &= \!\bbE\!\left[\log_2\!\left(\!1\! +\! \frac{\rho \alpha_b\|\bar\bh_{{\rm b},k}\|^4}{ \rho\alpha_b \sum_{i \neq k}^{N_u}|\bar\bh_{{\rm b},k}^H\bar\bh_{{\rm b},i}|^2 + \|\bar\bh_{{\rm b},k}\|^2 + \rho\beta_b\Psi_k}\!\right)\!\right].
\end{align}
Since $\bW_{\rm RF_2} = \bW_{\rm DFT}$ is unitary, we have $\|\bar \bh_{{\rm b},i}^H\bar \bh_{{\rm b},j}\| = \| \bh_{{\rm b},i}^H \bh_{{\rm b},j}\|$, $\forall i, j$. 
We approximate the ergodic rate \eqref{eq:E_rk} as
\small
\begin{align} \nonumber
    &\bar r_k^{\rm mrc} \!=\! \bbE\!\left[\log_2\left(\!1 + \frac{\rho \alpha_b\|\bh_{{\rm b},k}\|^4}{ \rho\alpha_b \sum_{i \neq k}^{N_u}|\bh_{{\rm b},k}^H\bh_{{\rm b},i}|^2 + \|\bh_{{\rm b},k}\|^2 + \rho\beta_b{ \Psi}_k}\!\right)\!\right] \\
    \label{eq:E_rk_approx}
    &\!\stackrel{(a)} \approx\!\log_2\!\left(\!1 \!+\! \frac{\rho \alpha_b\bbE\big[\|\bh_{{\rm b},k}\|^4\big]}{ \rho\alpha_b \!\sum_{i \neq k}^{N_u}\!\bbE\big[|\bh_{{\rm b},k}^H\bh_{{\rm b},i}|^2\big] \!+\! \bbE\big[ \|\bh_{{\rm b},k}\|^2\big]\! + \!\rho\beta_b\bbE\big[{\Psi}_k\big]}\right)
\end{align}
\normalsize
where $(a)$ follows from Lemma 1 in \cite{zhang2014power}.

We first analyze the average quantization error with two-stage analog combining and MRC $\bbE[\Psi_k]$ in \eqref{eq:E_rk_approx}.
Noting that $\Psi_k = \bh_{{\rm b},k}^H\bW_{\rm DFT} {\rm diag}\big\{\bW_{\rm DFT}^H\bH_{\rm b}\bH_{\rm b}^H\bW_{\rm DFT}\big\}\bW_{\rm DFT}^H\bh_{{\rm b},k}$, 
we decompose $\bbE[\Psi_k]$ as $\bbE[\Psi_k] = \bbE[\Psi^{\rm auto}_k] + \bbE[\Psi^{\rm cross}_k]$, and define the auto quantization noise and cross quantization noise variances as
\begin{align}
   	\label{eq:QN_auto}
    &\bbE\big[\Psi_{k}^{\rm auto}\big] \\
    \nonumber
    &= \bbE\Big[\bh_{{\rm b},k}^H\!\bW_{\rm DFT} {\rm diag}\big\{\bW_{\rm DFT}^H\bh_{{\rm b},k}\bh_{{\rm b},k}^H\bW_{\rm DFT}\big\}\bW_{\rm DFT}^H\bh_{{\rm b},k}\Big],\\
   \label{eq:QN_cross}
    &\bbE\big[\Psi_{k}^{\rm cross}\big] \\
    \nonumber
    &\!=\! \bbE\Big[\bh_{{\rm b},k}^H\!\bW_{\rm DFT} {\rm diag}\big\{\!\bW_{\rm DFT}^H\bH_{{\rm b}\setminus k}\bH_{{\rm b}\setminus k}^H\bW_{\rm DFT}\!\big\}\bW_{\rm DFT}^H\bh_{{\rm b},k}\Big]
\end{align}
where $\bH_{{\rm b}\setminus k}$ denotes the channel matrix $\bH_{\rm b}$ without its $k$th column.
Then, \eqref{eq:QN_auto} and \eqref{eq:QN_cross} represent the average quantization errors for the associated user caused by the associated user itself and other users, respectively.
\begin{lemma}\label{lem:QN_auto}
    For the considered mmWave channel, the auto quantization noise variance for the two-stage analog combining of the ARV-TSAC algorithm with MRC  \eqref{eq:QN_auto} is derived as 
    \begin{align}
        \label{eq:QN_auto_result}
        \bbE\big[\Psi_{k}^{\rm auto}\big] = \frac{2N_r^2}{N_{\rm RF}}.
    \end{align}
\end{lemma}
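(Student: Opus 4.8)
\textbf{Proof proposal for Lemma \ref{lem:QN_auto}.}

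The plan is to compute $\bbE[\Psi_k^{\rm auto}]$ directly by expanding the quadratic form in terms of the DFT matrix entries and the sparse channel entries, then exploiting the sparsity structure $g_{\ell,k}=\xi_{\ell,k}\mathds{1}_{\{\ell\in\mathcal{P}_k\}}$ together with Gaussianity of the $\xi_{\ell,k}$. First I would write $\bW_{\rm DFT}=\frac{1}{\sqrt{N_{\rm RF}}}[f_{mn}]$ with $|f_{mn}|=1$, so that $\bbE[\Psi_k^{\rm auto}]$ becomes a multiple sum over indices of the form $\bbE\big[\sum h^*_{{\rm b},k}(a)\,h_{{\rm b},k}(b)\,h^*_{{\rm b},k}(c)\,h_{{\rm b},k}(d)\cdot(\text{phase factors})\big]$. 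Since $\bh_{{\rm b},k}=\sqrt{N_r/L}\,\bg_k$ and the nonzero entries are i.i.d.\ $\mathcal{CN}(0,1)$, the fourth-moment expectation is nonzero only when the indices pair up: either $a=b,c=d$ or $a=d,c=b$ (the $a=c=b=d$ case being counted once, with $\bbE|\xi|^4=2$). This is the standard Gaussian fourth-moment (Isserlis) bookkeeping.

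Next I would carry out the index contraction. Averaging over the random \emph{support} $\mathcal{P}_k$ as well (uniform among size-$L$ subsets of $\{1,\dots,N_{\rm RF}\}$) — or, following the paper's modelling convention, treating each coordinate as active with the appropriate probability — the key combinatorial facts are: the probability that a fixed index is active is $L/N_{\rm RF}$, and the probability that two distinct fixed indices are both active is $L(L-1)/(N_{\rm RF}(N_{\rm RF}-1))$. Combined with $\bbE[|\xi_{\ell,k}|^2]=1$ and the scaling $(N_r/L)^2$ in front, the ``diagonal'' pairings contribute a term proportional to $N_r^2/N_{\rm RF}$; here the phase factors collapse because $|f_{mn}|^2=1$ and the diagonal structure of ${\rm diag}\{\cdot\}$ forces equal row/column indices, so all surviving phases cancel. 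The off-diagonal pairings, where two distinct channel indices meet, produce sums of DFT phases of the form $\sum_m f_{m,a-b}$ over a full period, which vanish by the orthogonality of DFT columns whenever $a\neq b$; hence those terms drop out entirely and only the ``self'' contribution survives. Putting the surviving pieces together and simplifying the constants should yield exactly $\bbE[\Psi_k^{\rm auto}]=2N_r^2/N_{\rm RF}$, where the factor $2$ traces back to $\bbE[|\xi|^4]=2$ for a standard complex Gaussian.

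The main obstacle I anticipate is the careful accounting of which index patterns survive \emph{both} the Gaussian moment constraints \emph{and} the ${\rm diag}\{\cdot\}$ restriction \emph{and} the DFT phase sums simultaneously — in particular making sure no cross terms between distinct support indices leak through via a coincidental phase cancellation, and getting the normalization constant right (the $1/N_{\rm RF}$ from each of the four $\bW_{\rm DFT}$ factors versus the $\sqrt{N_r/L}$ from each of the four channel factors, against the number of surviving index tuples). A secondary subtlety is whether one treats the support $\mathcal{P}_k$ as a uniformly random fixed-size set or uses the independent-activation relaxation implicit in the indicator-function model; either way the leading-order term is $2N_r^2/N_{\rm RF}$, but one should state which convention is in force so the exact equality (rather than an approximation) is justified. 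I would expect the clean cancellation of all DFT cross-phase sums to be the crucial simplification that makes the final answer so compact.
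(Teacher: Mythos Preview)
Your overall plan---expand in channel indices, apply the complex-Gaussian fourth-moment (Isserlis) rule, then average over the random support with $\bbP(\ell\in\mathcal{P}_k)=L/N_{\rm RF}$ and $\bbP(\ell_1,\ell_2\in\mathcal{P}_k)=\tfrac{L(L-1)}{N_{\rm RF}(N_{\rm RF}-1)}$---is sound, but the step where you discard the ``off-diagonal'' contributions by DFT orthogonality is wrong, and carrying it out as described yields $2N_r^2/(LN_{\rm RF})$, not $2N_r^2/N_{\rm RF}$. Once ${\rm diag}\{\cdot\}$ collapses everything to a single DFT column index $m$ (equivalently, $\Psi_k^{\rm auto}=\sum_m|\bh_{{\rm b},k}^H\bw_m|^4$), the phase attached to $g_a^* g_b g_c^* g_d$ is $f_{am}\bar f_{bm}f_{cm}\bar f_{dm}$. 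For \emph{either} Wick pairing $(a=b,\,c=d)$ or $(a=d,\,c=b)$ this collapses to $|f_{\cdot m}|^2|f_{\cdot m}|^2=1$ for every $m$; there is no residual phase sum of the form $\sum_m f_{m,a-c}$, so DFT column orthogonality has nothing to annihilate. The distinct-index contributions therefore survive in full and supply the missing factor of $L$: they add $2L(L-1)/N_{\rm RF}^2$ to $\bbE[|\bg_k^H\bw_m|^4]$, on top of the $2L/N_{\rm RF}^2$ from the all-equal case, giving $2L^2/N_{\rm RF}^2$ in total.

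The paper sidesteps the four-index bookkeeping by first noting $\Psi_k^{\rm auto}=\sum_i|\bh_{{\rm b},k}^H\bw_i|^4$ and then using $\bbE[|\bg_k^H\bw_i|^4]=\bbV[|\bg_k^H\bw_i|^2]+(\bbE[|\bg_k^H\bw_i|^2])^2$. The variance is split into $\bbV[\|\bg_k\|^2]=L$ and the sum of cross-term variances $\sum_{\ell_1\neq\ell_2}\bbV[g_{\ell_1}^*g_{\ell_2}]=L(L-1)$ (the analogue of your ``off-diagonal'' pieces), both divided by $N_{\rm RF}^2$; the latter are kept and are essential for the result. Only the constant-modulus property $|w_{\ell,i}|=1/\sqrt{N_{\rm RF}}$ is used---DFT orthogonality plays no role in this lemma.
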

\begin{proof}
    See Appendix \ref{appx:QN_auto}.
\end{proof}
Note that the quantization noise variance decreases as the number of RF chains $N_{\rm RF}$ increases, which corresponds to the intuition: the second DFT analog combiner spreads the quantization noise over the $N_{\rm RF}$ chains and thus reduces the quantization error more as $N_{\rm RF}$ increases.
\begin{lemma}\label{lem:QN_cross}
     For the considered mmWave channel, the cross quantization noise variance for the two-stage analog combining of the ARV-TSAC algorithm with MRC  \eqref{eq:QN_cross} is derived as
    \begin{align}   
        \label{eq:QN_cross_result}
        \bbE\big[\Psi_{k}^{\rm cross}\big]  = \frac{N_r^2(N_u-1)}{N_{\rm RF}}.
    \end{align}
\end{lemma}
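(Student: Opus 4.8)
The plan is to reduce $\bbE[\Psi_k^{\rm cross}]$ to a product of elementary second moments by exploiting (i) the diagonal structure inside the quadratic form and (ii) the mutual independence of the beamspace channels of distinct users. Let $\pmbf_m$ denote the $m$th column of $\bW_{\rm DFT}$, so that $|[\pmbf_m]_n|^2 = 1/N_{\rm RF}$ for all $n,m$. Writing $\bH_{{\rm b}\setminus k}\bH_{{\rm b}\setminus k}^H = \sum_{i\neq k}\bh_{{\rm b},i}\bh_{{\rm b},i}^H$, the $m$th diagonal entry of ${\rm diag}\{\bW_{\rm DFT}^H\bH_{{\rm b}\setminus k}\bH_{{\rm b}\setminus k}^H\bW_{\rm DFT}\}$ equals $\sum_{i\neq k}|\pmbf_m^H\bh_{{\rm b},i}|^2$, and hence, from \eqref{eq:QN_cross},
\begin{align}\nonumber
\Psi_k^{\rm cross} = \sum_{m=1}^{N_{\rm RF}} |\pmbf_m^H\bh_{{\rm b},k}|^2 \sum_{i\neq k}^{N_u} |\pmbf_m^H\bh_{{\rm b},i}|^2 .
\end{align}
Since the support set $\mathcal{P}_i$ and the nonzero gains $\{\xi_{\ell,i}\}$ of user $i$ are drawn independently of those of user $k$ for $i\neq k$, the two factors $|\pmbf_m^H\bh_{{\rm b},k}|^2$ and $|\pmbf_m^H\bh_{{\rm b},i}|^2$ are independent, so that
\begin{align}\nonumber
\bbE\big[\Psi_k^{\rm cross}\big] = \sum_{m=1}^{N_{\rm RF}}\sum_{i\neq k}^{N_u} \bbE\big[|\pmbf_m^H\bh_{{\rm b},k}|^2\big]\,\bbE\big[|\pmbf_m^H\bh_{{\rm b},i}|^2\big] .
\end{align}

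The next step is to evaluate the single quantity $\bbE[|\pmbf_m^H\bh_{{\rm b},j}|^2]$ for an arbitrary user $j$. Using $\bh_{{\rm b},j} = \sqrt{N_r/L}\,\bg_j$ with $g_{\ell,j} = \xi_{\ell,j}\mathds{1}_{\{\ell\in\mathcal{P}_j\}}$, I would expand $|\pmbf_m^H\bh_{{\rm b},j}|^2 = \tfrac{N_r}{L}\sum_{n,n'}\overline{[\pmbf_m]_n}[\pmbf_m]_{n'}\,g_{n,j}\overline{g_{n',j}}$. The zero mean and independence of the $\xi_{\ell,j}$ annihilate every off-diagonal term $n\neq n'$, while for $n=n'$ one has $\bbE[|\xi_{n,j}|^2\mathds{1}_{\{n\in\mathcal{P}_j\}}] = \Pr(n\in\mathcal{P}_j) = L/N_{\rm RF}$ by the uniform placement of the $L$ nonzero entries. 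Combined with $\sum_n|[\pmbf_m]_n|^2 = 1$ (the constant modulus of the DFT entries), this gives $\bbE[|\pmbf_m^H\bh_{{\rm b},j}|^2] = \tfrac{N_r}{L}\cdot\tfrac{L}{N_{\rm RF}} = N_r/N_{\rm RF}$, independent of both $m$ and $j$. Substituting this back yields $\bbE[\Psi_k^{\rm cross}] = N_{\rm RF}(N_u-1)(N_r/N_{\rm RF})^2 = N_r^2(N_u-1)/N_{\rm RF}$, which is \eqref{eq:QN_cross_result}.

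I do not expect a genuine obstacle here: the only slightly delicate point is that the fourth-order moment $\bbE[\Psi_k^{\rm cross}]$ collapses into a product of two second-order moments, and this relies squarely on the across-user independence of supports and gains — had the two factors involved the same user (as in the auto term), a true fourth-moment computation would be needed, which is why Lemma \ref{lem:QN_auto} must go through Appendix \ref{appx:QN_auto} while this one does not. It is also worth noting explicitly that the argument uses only the constant-modulus property $|[\bW_{\rm DFT}]_{n,m}|^2 = 1/N_{\rm RF}$, so the result holds verbatim for any unitary spreading matrix satisfying condition $(ii)$ of Theorem \ref{thm:optimality_two_stage}.
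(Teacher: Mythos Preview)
Your proof is correct and follows essentially the same route as the paper's Appendix~\ref{appx:QN_cross}: both expand $\Psi_k^{\rm cross}$ as the double sum $\sum_{m}\sum_{i\neq k}|\pmbf_m^H\bh_{{\rm b},k}|^2|\pmbf_m^H\bh_{{\rm b},i}|^2$, use cross-user independence to factor the expectation, and reduce everything to the second moment $\bbE[|\bg_j^H\bw_m|^2]=L/N_{\rm RF}$ from \eqref{eq:Egw}. The only cosmetic difference is that the paper packages the inner expectation as $\bbE_{\bg_u}[\bg_u\bg_u^H]=\tfrac{L}{N_{\rm RF}}\bI$ before contracting with $\bw_i$, whereas you compute the scalar $\bbE[|\pmbf_m^H\bh_{{\rm b},j}|^2]=N_r/N_{\rm RF}$ directly.
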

\begin{proof}
    See Apprendix \ref{appx:QN_cross}.
\end{proof}
Since both $\bbE\big[\Psi_{k}^{\rm auto}\big]$ and $\bbE\big[\Psi_{k}^{\rm cross}\big]$ decrease with $N_{\rm RF}$, the quantization error with the proposed two-stage analog combining and MRC combining is expected to decrease as $N_{\rm RF}$ increases, leading the ergodic rate to the same scaling law as in \eqref{eq:C_opt}. 
We derive the approximated ergodic sum rate of \eqref{eq:rk_mrc} in closed form and validate the insight.
\begin{theorem}
    \label{thm:ergodic_rate_mrc}
    For the considered mmWave channel with low-resolution ADCs, the ergodic sum rate of the ARV-based TSAC method with MRC is approximated as
    \begin{align}
        \label{eq:ER_mrc}
        \bar \cR^{\rm mrc} \! \approx \! N_u \! \log_2\!\Bigg(\!1\!+\! \frac{ \rho\alpha_b N_r N_{\rm RF}(1+1/L)}{N_{\rm RF} \!+\! \rho N_r(N_u-1)\! +\! 2\rho(1-\alpha_b) N_r}\!\Bigg).
    \end{align}
\end{theorem}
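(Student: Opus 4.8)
The plan is to evaluate the four expectations that appear inside the logarithm of the approximation \eqref{eq:E_rk_approx}, namely $\bbE[\|\bh_{{\rm b},k}\|^4]$, $\bbE[\|\bh_{{\rm b},k}\|^2]$, $\sum_{i\neq k}^{N_u}\bbE[|\bh_{{\rm b},k}^H\bh_{{\rm b},i}|^2]$, and $\bbE[\Psi_k]$, then substitute and simplify. Since $r_k^{\rm mrc}$ has the same law for every $k$ under the stated channel model, the ergodic sum rate is simply $N_u$ times $\bar r_k^{\rm mrc}$, so it suffices to handle one user and no new approximation is introduced beyond step $(a)$ of \eqref{eq:E_rk_approx}.

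First I would compute the signal and noise-normalizing moments. Writing $\bh_{{\rm b},k}=\sqrt{N_r/L}\,\bg_k$ with $\bg_k$ carrying $L$ nonzero entries that are i.i.d.\ $\cC\cN(0,1)$, the scalar $\|\bg_k\|^2$ is a sum of $L$ independent unit-mean exponential variables, i.e.\ $\Gamma(L,1)$; hence $\bbE[\|\bg_k\|^2]=L$ and $\bbE[\|\bg_k\|^4]=\mathrm{Var}+\mathrm{mean}^2=L(L+1)$. This gives $\bbE[\|\bh_{{\rm b},k}\|^2]=N_r$ and $\bbE[\|\bh_{{\rm b},k}\|^4]=N_r^2(L+1)/L=N_r^2(1+1/L)$, and the factor $(1+1/L)$ already matches the numerator of \eqref{eq:ER_mrc}.

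Next I would treat the inter-user term. Expanding $\bh_{{\rm b},k}^H\bh_{{\rm b},i}=(N_r/L)\bg_k^H\bg_i$ and averaging over the Gaussian path gains first (conditioned on the nonzero-index sets $\cP_k,\cP_i$), the off-diagonal cross terms vanish by zero mean and independence, leaving $\bbE[|\bg_k^H\bg_i|^2]=\bbE[\,|\cP_k\cap\cP_i|\,]$. Because the $L$ nonzero coordinates of each user are uniformly placed over the $N_{\rm RF}$ positions and the users are independent, $\bbE[|\cP_k\cap\cP_i|]=\sum_{j=1}^{N_{\rm RF}}\Pr[j\in\cP_k]\Pr[j\in\cP_i]=L^2/N_{\rm RF}$, so $\bbE[|\bh_{{\rm b},k}^H\bh_{{\rm b},i}|^2]=N_r^2/N_{\rm RF}$ and the sum over the $N_u-1$ interferers equals $(N_u-1)N_r^2/N_{\rm RF}$. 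Then I would invoke Lemma~\ref{lem:QN_auto} and Lemma~\ref{lem:QN_cross} to get $\bbE[\Psi_k]=\bbE[\Psi_k^{\rm auto}]+\bbE[\Psi_k^{\rm cross}]=(2N_r^2+(N_u-1)N_r^2)/N_{\rm RF}$.

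Finally, substituting these moments into \eqref{eq:E_rk_approx} and multiplying numerator and denominator by $N_{\rm RF}/N_r$, the numerator becomes $\rho\alpha_b N_r N_{\rm RF}(1+1/L)$ and the denominator collapses to $N_{\rm RF}+\rho\alpha_b(N_u-1)N_r+2\rho\beta_b N_r+\rho\beta_b(N_u-1)N_r$; using $\alpha_b+\beta_b=1$ and $\beta_b=1-\alpha_b$, this equals $N_{\rm RF}+\rho N_r(N_u-1)+2\rho(1-\alpha_b)N_r$, and multiplying $\bar r_k^{\rm mrc}$ by $N_u$ yields \eqref{eq:ER_mrc}. I expect the main obstacle to be the interference moment $\bbE[|\bh_{{\rm b},k}^H\bh_{{\rm b},i}|^2]$: it requires cleanly separating the average over the Gaussian gains from the average over the random supports and justifying that the supports are uniform and independent under the modeling assumption stated before the theorem; once Lemmas~\ref{lem:QN_auto} and~\ref{lem:QN_cross} are in hand, the remaining algebra is routine bookkeeping.
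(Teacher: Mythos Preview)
Your proposal is correct and follows essentially the same route as the paper: compute $\bbE[\|\bh_{{\rm b},k}\|^2]=N_r$, $\bbE[\|\bh_{{\rm b},k}\|^4]=N_r^2(1+1/L)$, $\bbE[|\bh_{{\rm b},k}^H\bh_{{\rm b},i}|^2]=N_r^2/N_{\rm RF}$, invoke Lemmas~\ref{lem:QN_auto} and~\ref{lem:QN_cross} for $\bbE[\Psi_k]$, substitute into \eqref{eq:E_rk_approx}, and multiply by $N_u$. Your support-overlap argument for the interference moment is just a repackaging of the paper's coordinate-wise indicator computation, and your final algebraic simplification using $\alpha_b+\beta_b=1$ matches the paper's result exactly.
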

\begin{proof}
   See Appendix \ref{appx:ergodic_rate_mrc}.
\end{proof}
Note that the derived ergodic rate in \eqref{eq:ER_mrc} is a function of system parameters and provides insights how the ergodic rate is improved with the proposed two-stage analog combining. 
\begin{remark}
\label{rm:scaling_law_MRC}
    Let $\kappa = N_{\rm RF} / N_r $ where $\kappa\in(0,1)$ is a constant value.
    Then, \eqref{eq:ER_mrc} can reduce to 
    \begin{align}
        \label{eq:ER_mrc2}
        \bar \cR^{\rm mrc}
        & \! \approx \!N_u \log_2\!\Bigg(\!1+ \frac{ \rho\alpha_b N_{\rm RF}(1+1/L)}{\kappa + \rho (N_u-1) + 2\rho (1-\alpha_b)}\!\Bigg).
    \end{align}
    The ergodic sum rate in \eqref{eq:ER_mrc2} achieves the optimal scaling law $\sim N_u \log N_{\rm RF}$  with respect to $N_{\rm RF}$  as in \eqref{eq:C_opt}.  
\end{remark}

Remark \ref{rm:scaling_law_MRC} shows that the optimal scaling law can be achieved by the proposed two-stage analog combining algorithm even with the practical baseband combiner.
This result verifies that the two-stage analog combining architecture is effective to enhance the achievable rate in mmWave hybrid MIMO systems with low-resolution ADCs. 
To specify the effect of employing the second analog combiner $\bW_{\rm RF_2}$, we also derive the ergodic rate \eqref{eq:E_rk} without using $\bW_{\rm RF_2}$.
\begin{corollary}\label{cor:ergodic_rate_one_mrc}
     For the considered mmWave channel with low-resolution ADCs, the MRC ergodic rate of the ARV-TSAC without the second analog combiner is approximated as
    \begin{align}
    	\nonumber
        &\bar \cR^{\rm mrc}_{\rm one}\\
        \label{eq:ER_mrc_one}
        &\! \approx\! N_u\!\log_2\!\Bigg(\!1\!+\! \frac{ \rho\alpha_b N_r N_{\rm RF}(1+1/L)}{ N_{\rm RF} \!+\! \rho   N_r (N_u \!-\!1) \!+\! 2\rho (1\!-\!\alpha_b)N_r N_{\rm RF}/L }\!\Bigg).
    \end{align}
    \normalsize
\end{corollary}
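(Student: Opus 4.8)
The plan is to follow the proof of Theorem~\ref{thm:ergodic_rate_mrc} verbatim and simply track the single place where omitting the second analog combiner $\bW_{\rm RF_2}$ alters a moment. Setting $\bW_{\rm RF_2}=\bI$ makes the effective post-combining channel $\bar\bH_{\rm b}=\bH_{\rm b}=\bW_{\rm RF_1}^H\bH$, so in the rate expression \eqref{eq:rk_mrc} the vector $\bar\bh_{{\rm b},k}$ is replaced by $\bh_{{\rm b},k}$ and the quantization term becomes $\Psi_k^{\rm one}=\bh_{{\rm b},k}^H{\rm diag}\{\bH_{\rm b}\bH_{\rm b}^H\}\bh_{{\rm b},k}$. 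Applying the same moment-exchange approximation (Lemma~1 of \cite{zhang2014power}) that produced \eqref{eq:E_rk_approx}, I reduce the problem to evaluating $\bbE[\|\bh_{{\rm b},k}\|^2]$, $\bbE[\|\bh_{{\rm b},k}\|^4]$, $\bbE[|\bh_{{\rm b},k}^H\bh_{{\rm b},i}|^2]$ for $i\neq k$, and $\bbE[\Psi_k^{\rm one}]$.

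The first three are exactly the quantities already computed inside the proof of Theorem~\ref{thm:ergodic_rate_mrc}: with $\bh_{{\rm b},k}=\sqrt{N_r/L}\,\bg_k$ and $\bg_k$ carrying $L$ i.i.d.\ $\cC\cN(0,1)$ nonzero entries on a uniformly drawn size-$L$ support $\mathcal{P}_k\subseteq\{1,\dots,N_{\rm RF}\}$, one has $\bbE[\|\bh_{{\rm b},k}\|^2]=N_r$, $\bbE[\|\bh_{{\rm b},k}\|^4]=N_r^2(1+1/L)$, and $\bbE[|\bh_{{\rm b},k}^H\bh_{{\rm b},i}|^2]=N_r^2/N_{\rm RF}$. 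For $\bbE[\Psi_k^{\rm one}]$ I split it as $\bbE[\Psi_k^{\rm auto,one}]+\bbE[\Psi_k^{\rm cross,one}]$ in analogy with \eqref{eq:QN_auto}--\eqref{eq:QN_cross} but without the $\bW_{\rm DFT}$ factors. The cross part is $\bbE[\sum_j|h_{{\rm b},k,j}|^2\sum_{i\neq k}|h_{{\rm b},i,j}|^2]$, which by the same support-overlap bookkeeping as Lemma~\ref{lem:QN_cross} equals $N_r^2(N_u-1)/N_{\rm RF}$. The auto part, however, is $\bbE[\sum_j|h_{{\rm b},k,j}|^4]=L\cdot(N_r/L)^2\,\bbE[|\xi|^4]=2N_r^2/L$, which is larger than the $2N_r^2/N_{\rm RF}$ of Lemma~\ref{lem:QN_auto} by the factor $N_{\rm RF}/L$; this single change --- the loss of the $1/N_{\rm RF}$ spreading gain in the auto-quantization noise --- is the whole content of the corollary.

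Substituting these moments into the approximated rate and multiplying numerator and denominator by $N_{\rm RF}/N_r$, the interference term $\rho\alpha_b\sum_{i\neq k}\bbE[|\bh_{{\rm b},k}^H\bh_{{\rm b},i}|^2]$ and the cross-quantization term $\rho\beta_b\,\bbE[\Psi_k^{\rm cross,one}]$ merge into $\rho N_r(N_u-1)$ because $\alpha_b+\beta_b=1$, the term $\bbE[\|\bh_{{\rm b},k}\|^2]$ yields $N_{\rm RF}$, and $\rho\beta_b\,\bbE[\Psi_k^{\rm auto,one}]$ yields $2\rho(1-\alpha_b)N_rN_{\rm RF}/L$, which together give \eqref{eq:ER_mrc_one}. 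I do not expect a genuine obstacle: the only care needed is the combinatorial accounting of the support overlaps when evaluating $\bbE[\Psi_k^{\rm auto,one}]$ and $\bbE[\Psi_k^{\rm cross,one}]$, and a one-line check that the approximation of \cite{zhang2014power} applies here exactly as in \eqref{eq:E_rk_approx}.
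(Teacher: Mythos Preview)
Your proposal is correct and follows essentially the same route as the paper: the paper's proof likewise reuses the three moments from Theorem~\ref{thm:ergodic_rate_mrc} unchanged and recomputes only the quantization-noise term $\bbE[\hat\Psi_k]=\bbE[\bh_{{\rm b},k}^H{\rm diag}\{\bH_{\rm b}\bH_{\rm b}^H\}\bh_{{\rm b},k}]$, obtaining $N_r^2(2/L+(N_u-1)/N_{\rm RF})$ via exactly the per-entry expectations $\bbE[|g_{\ell,k}|^4]=2L/N_{\rm RF}$ and $\bbE[|g_{\ell,k}|^2|g_{\ell,u}|^2]=(L/N_{\rm RF})^2$ that underlie your auto/cross split. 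Your explicit merging of the interference and cross-quantization contributions via $\alpha_b+\beta_b=1$ is the same simplification the paper carries out when forming \eqref{eq:ER_mrc_one}.
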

\begin{proof}
   See Appendix \ref{appx:ergodic_rate_one_mrc}.
\end{proof}
Unlike the quantization noise term $2\rho(1-\alpha_b) N_r$ in \eqref{eq:ER_mrc}, that $2\rho(1-\alpha_b) N_r N_{\rm RF}/L$ in \eqref{eq:ER_mrc_one} includes $N_{\rm RF}/L$, which prevents the optimal scaling of the ergodic sum rate as in \eqref{eq:C_opt} with respect to $N_{\rm RF}$ for fixed $L$.  
\begin{remark}\label{rm:no_scaling_MRC}
    Let $\kappa = N_{\rm RF} / N_r $ where $\kappa \in(0,1)$ is a constant value.
    Then, \eqref{eq:ER_mrc_one} can reduce to 
    \begin{align}
        \label{eq:ER_mrc_one2}
       \bar \cR^{\rm mrc}_{\rm one}\!\! \approx\!\! N_u\!\log_2\!\Bigg(\!\!1\!+\! \frac{ \rho\alpha_b  N_{\rm RF}(1+1/L)}{\kappa \!+\! \rho   (N_u \!-\!1)\! +\! 2 \rho(1\!-\!\alpha_b)N_{\rm RF}/L }\!\!\Bigg).
    \end{align}
     Note that unlike the ergodic rate of the two-stage analog combining $\bar \cR^{\rm mrc}$ in \eqref{eq:ER_mrc2}, that of the one-stage analog combining $\bar \cR^{\rm mrc}_{\rm one}$ in \eqref{eq:ER_mrc_one2} cannot achieve the optimal scaling law with respect to the number of RF chains $N_{\rm RF}$.
\end{remark}

\section{Simulation Results}
\label{sec:sim}

In this section, we evaluate the performance of the proposed two-stage analog combing algorithm in the MI and ergodic sum rate. 
In the simulations, we set the codebook size to be $|\cV|=N_r$, which guarantees $\bW_{\rm RF}^H\bW_{\rm RF} = \bI_{N_{\rm RF}}$.  Consequently, analog combiners used in the simulations are semi-unitary.
To provide a reference performance of a conventional one-stage analog combining approach, we simulate a greedy-based MI maximization method which solves the following problem for the given ARV codebook in a greedy way:
\begin{align*}
	\cP2:& ~~\qquad\qquad \bW_{\rm RF}^{\rm opt,c}=\argmax_{\bW_{\rm RF}} ~\cC(\bW_{\rm RF})\\
	&\text{s.t.} \quad\bW_{\rm RF}^H\bW_{\rm RF}=\bI,~~ |[\bW_{\rm RF}]_{i,j}| = \frac{1}{\sqrt{N_r}}, \forall i,j.
\end{align*} 
At each iteration, the greedy method searches for a single ARV from the codebook $\cV$ which maximizes the MI with the previously selected ARVs and thus can nearly provide the optimal MI performance of the one-stage analog combining for the given codebook. 

In the simulations, we evaluate the following cases:
\begin{enumerate}
    \item ARV-TSAC: proposed two-stage analog combining.
    \item ARV: one-stage analog combining with $\bW_{\rm RF} = \bW_{\rm RF_1}$ selected from the ARV-TSAC.
    \item SVD+DFT: two-stage analog combining with $\bW_{\rm RF_1} = \bU_{1:N_{\rm RF}}$ and $\bW_{\rm RF_2} = \bW_{\rm DFT}$ based on Theorem \ref{thm:optimality_two_stage}.
    \item SVD: one-stage analog combining $\bW_{\rm RF} = \bU_{1:N_{\rm RF}}$.
    \item Greedy-MI: one-stage analog combining with greedy-based MI maximization.
\end{enumerate}
The SVD+DFT and SVD cases are infeasible in practice due to violating the constant modulus constraint, and SVD+DFT provides a tight upper bound on MI for a homogeneous singular value case from Theorem \ref{thm:equal_eigenvalue_optimal}.
Here, we adopt $L_k = {\rm max}\{1, {\rm Poisson}(\lambda_L)\}$ \cite{akdeniz2014millimeter} unless mentioned otherwise, where $\lambda_L$ is considered as the average number of channel paths.

\begin{figure}[!t]\centering
	\includegraphics[scale = 0.34]{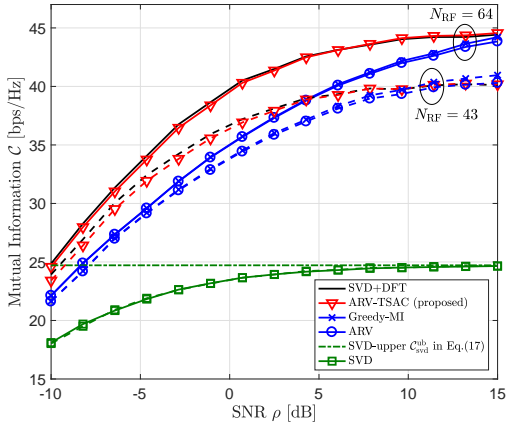}
	\caption{The MI simulation results for $N_r = 128$ receive antennas,  $N_u = 8$ users, $\lambda_L = 3$ average channel paths, $b=2$ quantization bits, and $N_{\rm RF} \in \{43, 64\}$ RF chains that are  $\lceil N_r/3 \rceil$ and $\lceil N_r/2\rceil$, respectively. }
	\label{fig:MI_SNR}
	\vspace{-1em}
\end{figure}

\begin{figure}[t]
\centering
$\begin{array}{c c}
{\resizebox{0.9\columnwidth}{!}
{\includegraphics{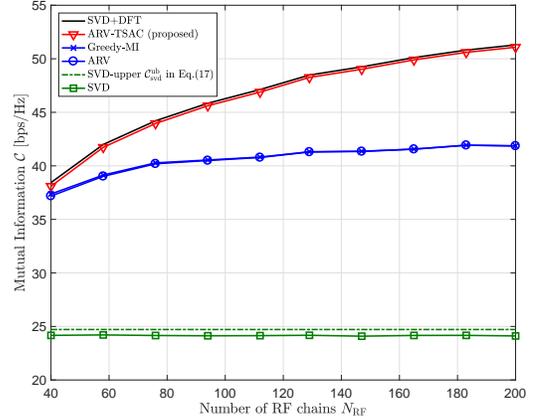}}
}\\ \mbox{\small(a) $N_r = 256$}\\
{\resizebox{0.9\columnwidth}{!}
{\includegraphics{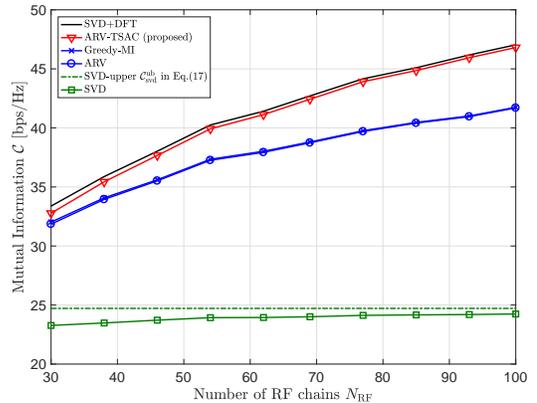}}
}\\\mbox{\small(b) $\kappa = 1/3$}
\end{array}$
\caption{The MI simulation results with $N_u = 8$ users, $\lambda_L = 4$ average channel paths, $b=2$ quantization bits, and $\rho = 0$ dB SNR for (a) $N_r = 256$ receive antennas and (b) $\kappa = N_{\rm RF}/N_r = 1/3$.}
\label{fig:MI_NRF}
\vspace{ -1 em}
\end{figure}

\begin{figure*}[!t]\centering
\centering
$\begin{array}{c c c}
{\resizebox{0.65\columnwidth}{!}
{\includegraphics{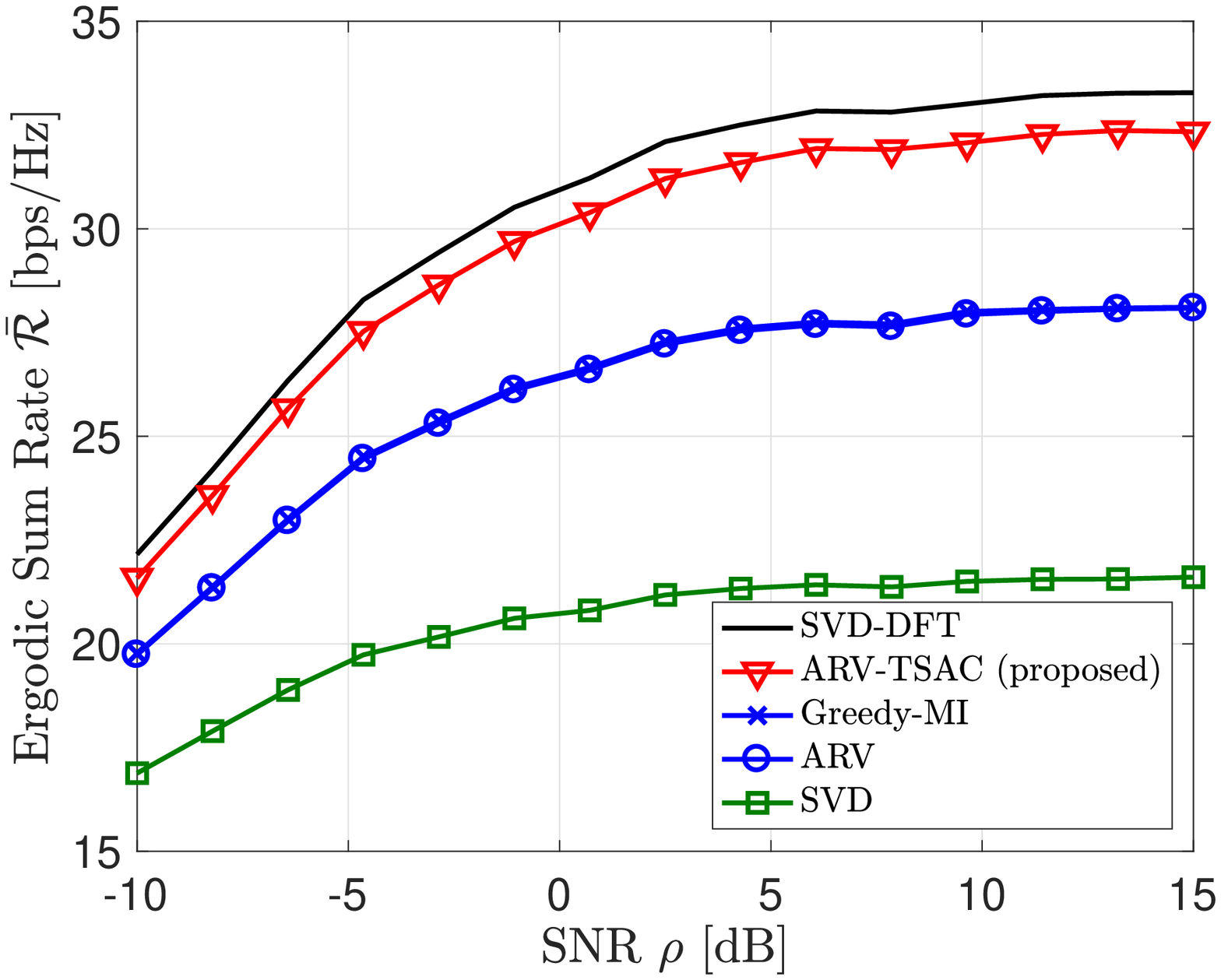}}
}&
{\resizebox{0.65\columnwidth}{!}
{\includegraphics{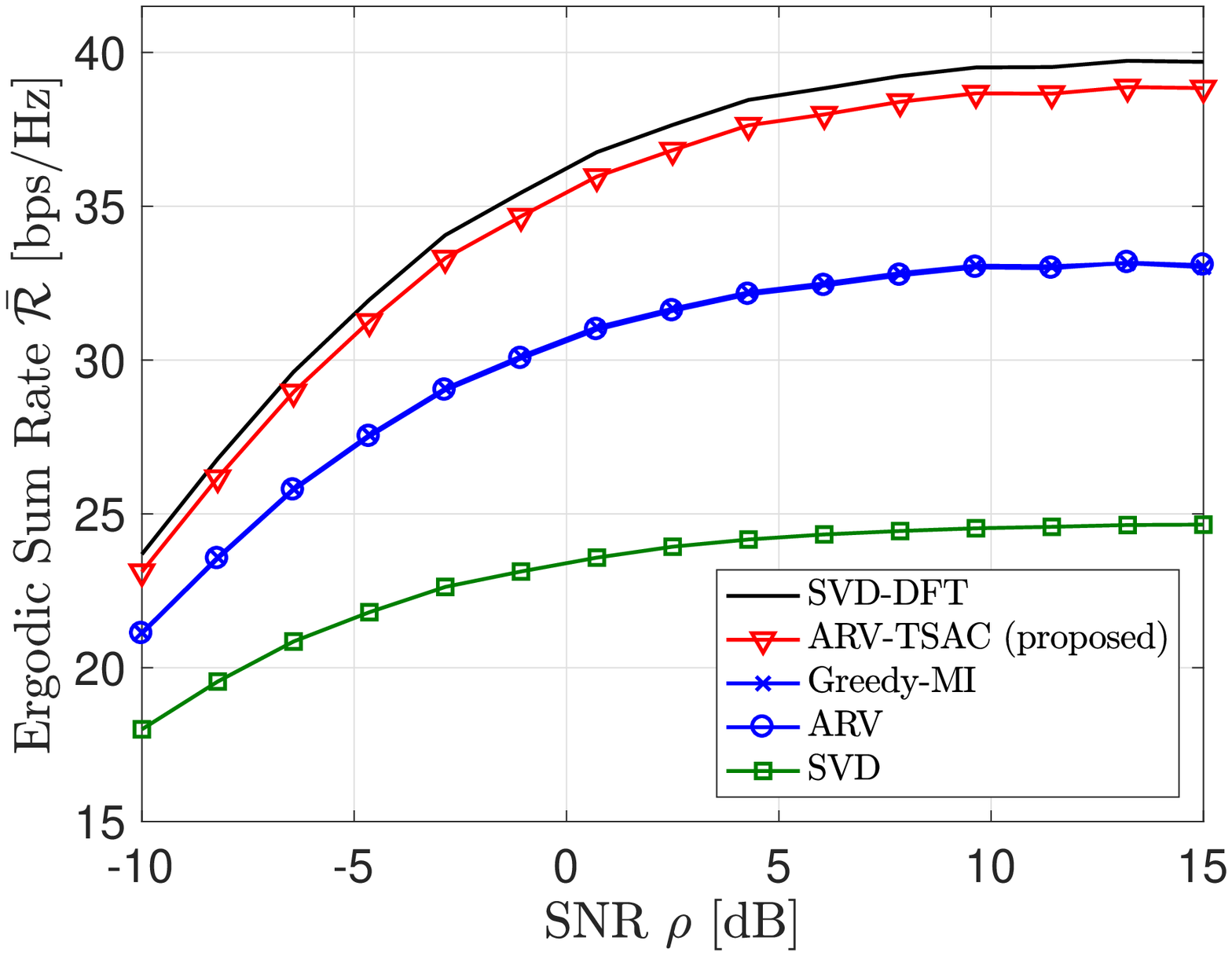}}
}&
{\resizebox{0.65\columnwidth}{!}
{\includegraphics{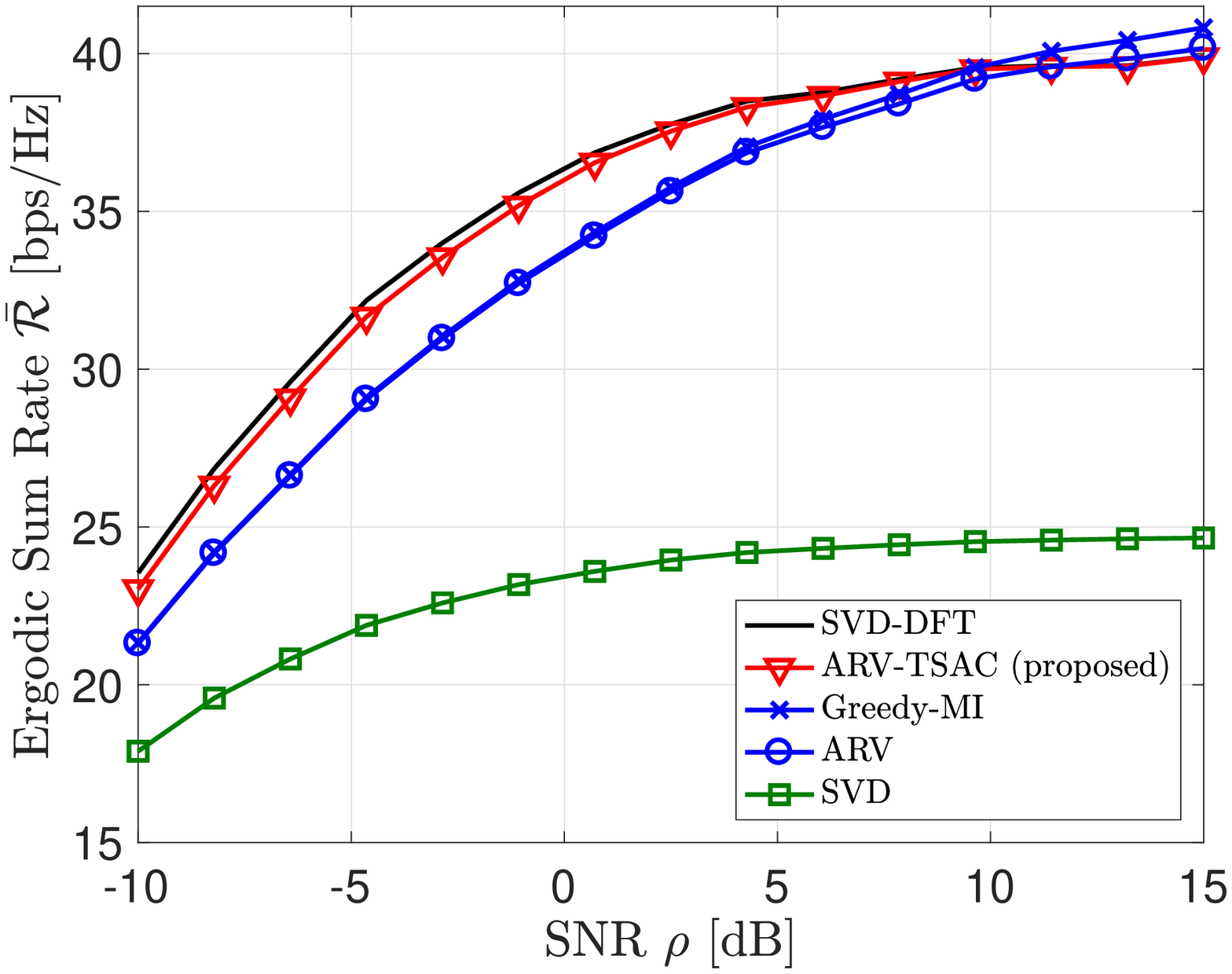}}
}\\
\mbox{\small (a) MRC} & \mbox{\small (b) ZF}  & \mbox{\small (c) MMSE}
\end{array}$
         \caption{Simulation results of the ergodic sum rate with $N_r = 128$ receive antennas, $N_{\rm RF} = 43$ RF chains, $N_u =8$ users, $\lambda_L = 3$ average channel paths, and $b=2$ quantization bits for (a) maximum ratio combining (MRC), (b) zero-forcing (ZF), and (c) minimum mean squared error (MMSE) digital combiners.} 
	\label{fig:Rate_WBB}
\end{figure*}

\subsection{Mutual Information}

Fig. \ref{fig:MI_SNR} shows the MI simulation results for $N_r = 128$, $N_{\rm RF} \in \{43, 64\}$, $N_u = 8$, $\lambda_L = 3$, and $b=2$ with respect to the SNR $\rho$.
The proposed ARV-TSAC algorithm achieves a similar MI as does the SVD+DFT case, and they show the best MI over the most SNR values.
The Greedy-MI and ARV cases provide similar MI to each other but show the MI gap from the ARV-TSAC. 
The gap decreases as $\rho$ increases in the high SNR regime, and the Greedy-MI and ARV cases with $N_{\rm RF} = 43$ show the higher MI than SVD+DFT and ARV-TSAC in the very high SNR regime.
Such phenomenon occurs as the channel environment does not guarantee the optimality condition for the two-stage analog combining solution in Theorem \ref{thm:equal_eigenvalue_optimal}.
As more RF chains are used, however, the MI gap between ARV-TSAC/SVD+DFT and Greedy-MI/ARV becomes larger and the performance reversal would happen in even the higher SNR regime.
This is because the proposed two-stage analog combining can exploit more RF chains to further reduce quantization errors.
The SVD case results in the worst MI performance and it converges to the theoretic upper bound $\cC_{\rm svd}^{\rm ub}$ due to the quantization error.

Fig. \ref{fig:MI_NRF} shows the MI simulation results with $N_u = 8$, $\lambda_L = 4$, $b=2$, and $\rho = 0$ dB in terms of  $N_{\rm RF}$.
In Fig. \ref{fig:MI_NRF}(a), $N_r$ is fixed to be $N_r = 256$. 
The two-stage combining cases, i.e., SVD+DFT and ARV-TSAC, show  that the MI increases logarithmically with $N_{\rm RF}$, and this corresponds to the scaling law derived in Theorem \ref{thm:optimality_two_stage}. 
The one-stage combining cases such as the Greedy-MI, ARV, and SVD cases, however, show a marginal increase of the MI as $N_{\rm RF}$ increases. 
In Fig. \ref{fig:MI_NRF}(b), the ratio between $N_r$ and $N_{\rm RF}$ is fixed to be $\kappa = 1/3$.
Here, the Greedy-MI and ARV cases also increase more slowly compared to the SVD+DFT and ARV-TSAC cases.
This is because more channel gains can be collected as $N_r$ increases for all cases, but the two-stage combining can reduce more quantization error as $N_{\rm RF}$ increases.
Accordingly, the MI gap between the two-stage combining and one-stage combining cases increases as $N_{\rm RF}$ increases. 

\subsection{Ergodic Sum Rate}

Now, we evaluate the ergodic rate for linear digital combiners $\bW_{\rm BB}$ such as MRC, zero-forcing (ZF), and MMSE.
Let $\bH_{\rm eq} = \bW_{\rm RF}^H \bH$.
The MRC, ZF, and MMSE combiners are given as: $\bW_{\rm BB, mrc} = \bH_{\rm eq}, \bW_{\rm BB, zf} = \bH_{\rm eq}(\bH_{\rm eq}^H\bH_{\rm eq})^{-1}$, and $\bW_{\rm BB, mmse} =\bR_{\by_{\rm q}\by_{\rm q}}^{-1} \bR_{\by_{\rm q}\bx}$, where $\bR_{\by_{\rm q}\bx} = \alpha\rho \bH_{\rm eq}$ and $\bR_{\by_{\rm q}\by_{\rm q}}\! =\! \alpha^2\rho \bH_{\rm eq}\bH_{\rm eq}^H \! +\! \alpha^2\bW_{\rm RF}^H\bW_{\rm RF}\! +\! \bR_{\bq\bq}$. 
For the given analog and digital combiners $ (\bW_{\rm RF}, \bW_{\rm BB})$ with $\bW_{\rm RF}^H \bW_{\rm RF} = \bI_{N_{\rm RF}}$, the ergodic rate of user $k$ is expressed as
\begin{align} \nonumber
   & \bar r_k(\bW_{\rm RF},\!\bW_{\rm BB}) = \bbE\Big[\log_2\left(1+{\alpha_b^2\rho|\bw_{{\rm BB},k}^H\bh_{{\rm eq},k}|^2}/{\eta_{{\rm BB},k}}\right)\Big]
\end{align}
where $\eta_{{\rm BB},k} = \alpha_b^2\rho\sum_{u\neq k}^{N_u}|\bw_{{\rm BB},k}^H\bh_{{\rm eq},u}|^2 + \alpha_b^2\|\bw_{{\rm BB},k}\|^2 + \bw_{{\rm BB},k}^H\bR_{\bq\bq}\bw_{{\rm BB},k}$.

Fig. \ref{fig:Rate_WBB} illustrates the ergodic sum rates with $N_r = 128$, $N_{\rm RF} = 43$, $N_u =8$, $\lambda_L = 3$, and $b=2$  versus the SNR $\rho$ for different digital combiners: (a) MRC, (b) ZF, and (c) MMSE.
Similarly to the MI results, ARV-TSAC shows the comparable ergodic rate to that of SVD+DFT and outperforms the one-stage combining such as the Greedy-MI and ARV cases in most cases.
We note that the SVD case also shows the worst sum rate performance in the considered systems.
The gaps between the two-stage combining cases and one-stage combining cases for the MRC and ZF combiners are much larger than the gap for the MMSE combiner. 
In addition, SVD+DFT and ARV-TSAC with the ZF combiner achieve the ergodic rates comparable to the MMSE combiner, while the Greedy-MI and ARV cases with the ZF combiner show much lower ergodic sum rates than that with the MMSE combiner.
Since the MRC and ZF combiners ignore the AWGN and quantization noise whereas the MMSE combiner does not, using the MMSE combiner improves the ergodic rate of the one-stage analog combining cases.
The two-stage analog combining cases, however, already reduced the quantization noise by using the second analog combiner, and thus, they provide the MMSE-like ergodic rate performance with the ZF combiner.
Therefore, the proposed two-stage analog combining with the ARV-TSAC algorithm can achieve significant rate improvement with the MRC or ZF combiners compared to the one-stage analog combining approach.

 
Fig. \ref{fig:Rate_NRF} provides the simulation results of the ergodic rate with the MRC digital combiner for $N_u =8$, $\lambda_L = 3$, and $\rho = 0$ dB in terms of the number of (a) RF chains $N_{\rm RF}$ and (b) quantization bits $b$.
In Fig. \ref{fig:Rate_NRF}(a), we consider $b = 2$ and $\kappa = N_{\rm RF}/N_r = 1/3$.
The ergodic rates of SVD+DFT and ARV-TSAC are similar and both increase logarithmically with $N_{\rm RF}$, whereas the ergodic rates of the Greedy-MI and ARV cases increase more slowly.
Such scaling results correspond to Remark \ref{rm:scaling_law_MRC} and \ref{rm:no_scaling_MRC}. 
As $N_r$ increases with a fixed $\kappa$, SVD+DFT and ARV-TSAC effectively reduce the more quantization error while obtaining larger channel gains, but the Greedy-MI and ARV cases only obtain larger channel gains without mitigating the quantization error.
In Fig. \ref{fig:Rate_NRF}(b), we consider $N_r = 128$ and $ N_{\rm RF} = 43$. 
We note that in the low-resolution ADC regime, the ARV-TSAC algorithm achieves the ergodic rate comparable to that of SVD+DFT and shows a noticeable improvement compared to the Greedy-MI, ARV, and SVD cases. 
As $b$ increases, the ergodic rates of the ARV-TSAC, Greedy-MI, and ARV algorithms converge to each other with a small gap from the SVD+DFT case.
The ergodic rate of the SVD case, however, converges to that of SVD+DFT without any gap because the SVD combining is optimal in maximizing the MI of infinite-resolution ADC systems. 
The simulation results validate the effectiveness of the proposed two-stage combining in low-resolution ADC systems.

\begin{figure}[t]
\centering
$\begin{array}{c}
{\resizebox{0.93\columnwidth}{!}
{\includegraphics{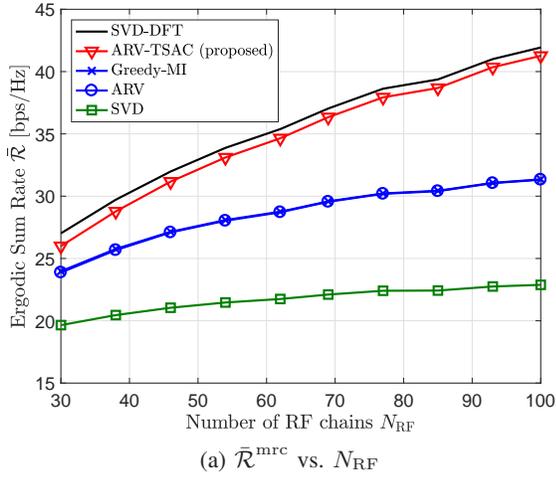}}
}\\ \mbox{\small (a) $\bar \cR^{\rm mrc}$ vs. $N_{\rm RF}$} \\
{\resizebox{0.93\columnwidth}{!}
{\includegraphics{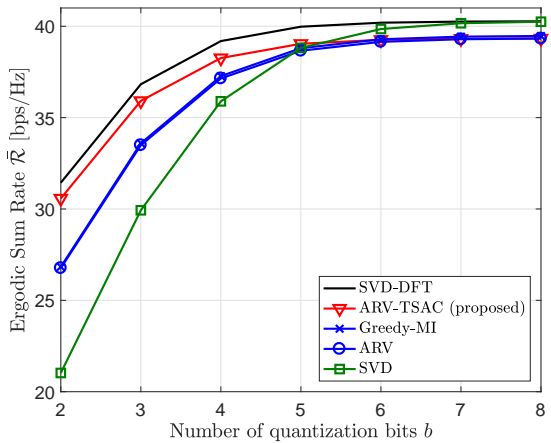}}
}\\ \mbox{\small (b)  $\bar \cR^{\rm mrc}$ vs. $b$} 
\end{array}$
\caption{Simulation results of the ergodic sum rate of the MRC combiner $\bar \cR^{\rm mrc}$ with $N_u =8$ users, $\lambda_L = 3$ average channel paths, and $\rho = 0$ dB SNR for (a) $b=2$ quantization bits and $\kappa = N_{\rm RF}/N_r  = 1/3$ and (b) $N_r =128$ receive antennas and $N_{\rm RF} = 43$ RF chains.}
\label{fig:Rate_NRF}
\end{figure}

Finally, we validate the derived ergodic rates in Theorem \ref{thm:ergodic_rate_mrc} and Corollary \ref{cor:ergodic_rate_one_mrc}. 
We consider $N_r = 128$ receive antennas, $N_{\rm RF} = 43$ RF chains, $N_u = 8$ users each with $L = 8$ channel paths for the virtual channels, and $b = 2$ quantization bits. 
In Fig. \ref{fig:Analysis}, the theoretical ergodic rates tightly align with the simulation results in the medium to high SNR regime, and show similar trend as the simulation results do.
Thus, the derived ergodic rates can characterize the ergodic rate performance of the proposed algorithm for the two-stage analog combining system in terms of the system parameters including quantization resolution.

Overall, the two-stage analog combining structure with the ARV-TSAC algorithm almost achieves the performance of SVD+DFT that is a near optimal solution for the unconstrained problem $\cP1$, while the greedy-MI and ARV algorithms provide a near optimal solution only for the constrained problem $\cP2$. 
Since $\cP1$ has a larger feasible set than $\cP2$ to find an optimal solution for the same objective function, this leads to $\cC(\bW_{\rm RF}^{\rm opt}) \geq \cC(\bW_{\rm RF}^{\rm opt,c})$.
In this regard, the ARV-TSAC algorithm achieves the higher performance than that of the Greedy-MI and ARV algorithms in most cases.
This shows that the proposed two-stage analog combining architecture with the ARV-TSAC is a practical solution suitable for the mmWave hybrid MIMO systems with low-resolution ADCs.



\section{Conclusion}
\label{sec:con}

In this paper, we derived a near optimal analog combining solution for an unconstrained MI maximization problem in hybrid MIMO systems with low-resolution ADCs. 
We showed optimalities of the solution in the scaling law and in maximizing the mutual information for a homogeneous channel singular value case.
To implement the derived solution, we proposed a two-stage analog combining architecture that decouples the channel gain aggregation and spreading functions in the solution into two cascaded analog combiners.
Accordingly, the proposed two-stage analog combining also provides a near optimal solution for the unconstrained problem whereas conventional hybrid algorithms offer a near optimal solution only for the constrained problem.
In addition, we derived a closed-form approximation to the ergodic rate, which reveals that our two-stage analog combiner achieves the optimal scaling law with a practical digital combiner.
Simulation results validated the key insights obtained in this paper and the derived ergodic rate, and also demonstrated that the proposed two-stage analog combining algorithm outperforms conventional algorithms. 

\begin{figure}[!t]\centering
	\includegraphics[scale = 0.42]{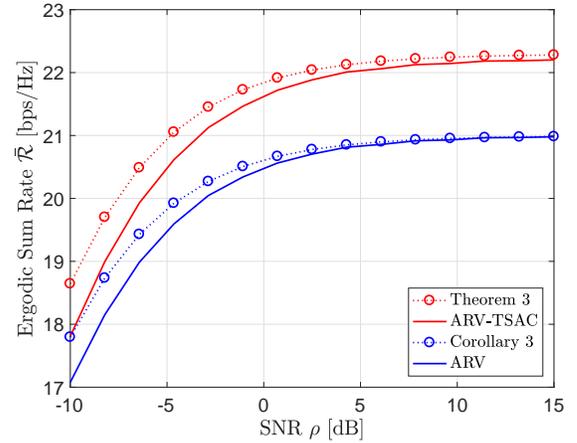}
	\caption{Comparison of the ergodic rate for the theoretical and simulation results with $N_r = 128$ receive antennas, $N_{\rm RF} = 43$ RF chains, $N_u = 8$ users each with $L = 8$ channel paths for the virtual channels.} 
	\label{fig:Analysis}
\end{figure}

\begin{appendices}

\section{Proof of Corollary \ref{cor:AoA_DFT}}
\label{appx:SVDtoARV}

  Let $\bH$ be decomposed into $\bH=\bA_{\rm AoA}\bH_{\rm V}$, where $\bH_{\rm V}={\rm blkdiag}\{\tilde{\bg}_1,\cdots,\tilde{\bg}_{N_u}\}$ and $\tilde{\bg}_k=\sqrt{\frac{N_r}{L_k}}[g_{1,k},\cdots,g_{L_k,k}]^T$. Then, it can be shown \cite{ngo2014aspects} that   as  $N_r \to \infty$,
    \begin{align}\label{eq:AoA_convergence}
        \bW_{\rm AoA}^H\!\bW_{\rm AoA} \!\to\! \bI_{N_{\rm RF}},~~\frac{1}{\sqrt{N_r}}\bW_{\rm AoA}^H\!\bH \!\to \!\frac{1}{\sqrt{N_r}}\!
        \left[\begin{matrix}
        \bH_{\rm V} \\
        {\bf 0}
        \end{matrix}
        \!\right].
    \end{align}
    Let $\tilde{\bH}_{\rm V}=[\bH_{\rm V}^T, {\bf 0}^T]^T$ and $ \bC_{\rm AoA}= \bW_{\rm RF_2}^{\star H}\tilde{\bH}_{\rm V}\tilde{\bH}_{\rm V}^H\bW_{\rm RF_2}^{\star}$. 
    Using \eqref{eq:AoA_convergence}, we show $\cC(\bW_{\rm RF})$ in \eqref{eq:c_wrf_wlambdaw} with $\bW_{\rm RF}=\tilde\bW_{\rm RF}^\star$ converges as $N_r \to \infty$ to 
    \begin{align} 
    	 \label{eq:W_AOA_convergence}
        &\left(\cC(\tilde\bW_{\rm RF}^\star)\! -\! \log_2\!\Big|\bI\! +\! \frac{\alpha_b}{\beta_b} {\rm diag}^{\!-1}\!\left\{\bC_{\rm AoA}\!+\!\tfrac{1}{\beta_b\rho}\bI\right\}\!\bC_{\rm AoA} \Big|\right)       \to 0.
    \end{align}
  Note that each diagonal of $\bW_{\rm RF_2}^{\star H}\tilde{\bH}_{\rm V}\tilde{\bH}_{\rm V}^H\bW_{\rm RF_2}^{\star}$ cannot exceed $\frac{1}{\kappa}\sum_{k=1}^{N_u}\frac{1}{L_k}(\sum_{\ell=1}^{L_k}|g_{\ell,k}|)^2=c_1 < \infty$. Let $\cC_{\rm \infty}(\tilde\bW_{\rm RF}^\star)$ denote the second term in \eqref{eq:W_AOA_convergence}. Then, $\cC_{\rm \infty}(\tilde\bW_{\rm RF}^\star)$ can be lower bounded as
  \begin{align}
      \cC_{\rm \infty}(\tilde\bW_{\rm RF}^\star) 
      &> \log_2\left|\bI_{N_{\rm RF}}+\frac{\alpha_b\rho}{c_1\beta_b\rho +1}\bW_{\rm RF_2}^{\star H}\tilde{\bH}_{\rm V}\tilde{\bH}_{\rm V}^H\bW_{\rm RF_2}^{\star}\right| \nonumber\\
      &\overset{(a)}{\sim} N_u \log_2N_{\rm RF}, ~\text{as}~ N_{\rm RF}\to \infty,\label{eq:lower_bound_of_conv_c}
  \end{align}
  where $(a)$ follows from the same reason of $(b)$ below \eqref{eq:achievable_rate_DFT_U}. This implies that $\cC(\tilde\bW_{\rm RF}^\star)$ follows the optimal scaling law. 
\qed
    
\section{Proof of Lemma \ref{lem:QN_auto}}
\label{appx:QN_auto}

The auto quantization noise variance term in \eqref{eq:QN_auto} can be expressed as
    \begin{align}
        \nonumber
        \bbE\big[&\Psi_k^{\rm auto}\big] = \bbE\!\left[ \sum_{i=1}^{N_{\rm RF}}\left|\bh_{{\rm b},k}^H \bw_i\right|^4 \right] = \left(\frac{N_r}{L}\right)^{\!2} \sum_{i=1}^{N_{\rm RF}}\bbE\!\left[\left|\bg_{k}^H \bw_i\right|^4 \right]\\
        \label{eq:auto_QN2}
        &= \left(\frac{N_r}{L}\right)^{\!\!2}\sum_{i=1}^{N_{\rm RF}}\!\bigg(\bbV\left[ \left|\bg_{k}^H \bw_i\right|^{2} \right]\!+\!\Big(\bbE\left[ \left|\bg_{k}^H \bw_i\right|^2 \right]\Big)^{\!2}\bigg)
    \end{align}
    where $\bw_i$ is the $i$th column of $\bW_{\rm DFT}$. 
    The expectation term  $\bbE[|\bg_{k}^H \bw_i|^2]$ in \eqref{eq:auto_QN2} is computed as 
    \begin{align}
        \label{eq:Egw}
         \bbE\left[ \left|\bg_{k}^H \bw_i\right|^2 \right] =
    \frac{1}{N_{\rm RF}}\bbE\left[\sum_{\ell=1}^{N_{\rm RF}}|g_{\ell,k}|^2 \right] = \frac{L}{N_{\rm RF}}.
    \end{align}
    Now, let $\hat \bw_i = \sqrt{N_{\rm RF}}\bw_i$. 
    Then, we can compute the variance term $\bbV[|\bg_{k}^H \bw_i|^2]$ in \eqref{eq:auto_QN2} as 
    \begin{align}
        \nonumber
        &\bbV\!\left[ \left|\bg_{k}^H \bw_i\right|^2 \right]
       \! =\! \frac{1}{N_{\rm RF}^2}\bbV\!\left[\sum_{\ell = 1}^{N_{\rm RF}}|g_{\ell,k}|^2 \!+\!\! \sum_{\ell_1 \neq \ell_2}^{N_{\rm RF}} \! g_{\ell_1,k}^*g_{\ell_2,k}\hat w_{\ell_1,i}^*\hat w_{\ell_2,i} \right] \\ \nonumber
        & \stackrel{(a)}= \frac{1}{N_{\rm RF}^2}\left( \bbV\!\left[\sum_{\ell = 1}^{N_{\rm RF}}|g_{\ell,k}|^2\right] + \bbV\!\left[\sum_{\ell_1 \neq \ell_2}^{N_{\rm RF}}g_{\ell_1,k}^*g_{\ell_2,k}\hat w_{\ell_1,i}^*\hat w_{\ell_2,i} \right]\right) \\
        \label{eq:Vgw}
        &  \stackrel{(b)}= \frac{1}{N_{\rm RF}^2}\left(\bbV\Big[\|\bg_k\|^2\Big] + \sum_{\ell_1 \neq \ell_2}^{N_{\rm RF}}\bbV\Big[g_{\ell_1,k}^*g_{\ell_2,k} \Big] \right)
    \end{align}
    where $(a)$ and $(b)$ hold as the associated terms are uncorrelated, which can be shown from straight forward mathematics, and $|\hat w_{\ell,i}| = 1$, $\forall \ell, i$.
    Since $\|\bg_k\|^2 \sim \chi^2_{2L}$, which is a chi-square distribution with $2L$ degrees of freedom,
    we have $\bbV[\|\bg_k\|^2] = L$, and $\bbV[g_{\ell_1,k}^*g_{\ell_2,k}]$ is computed as
    \begin{align}\nonumber
       \bbV\big[g_{\ell_1,k}^*g_{\ell_2,k}\big] &= \bbV\Big[\xi_{\ell_1,k}^* \xi_{\ell_2,k} \mathds{1}_{\{\ell_1 \in \mathcal{P}_k\}}\mathds{1}_{\{\ell_2 \in \mathcal{P}_k\}}\Big] \\ \nonumber
        &\overset{(a)}{=} \bbE\big[|\xi_{\ell_1,k}^*\xi_{\ell_2,k} |^2\big]\bbE\left[\mathds{1}_{\{\ell_1,\ell_2\in \mathcal{P}_k\}}\right] \\
        \nonumber
        &~\quad -\Big(\bbE\left[\xi_{\ell_1,k}^*\xi_{\ell_2,k}\right]\Big)^2 \Big(\bbE\big[\mathds{1}_{\{\ell_1,\ell_2 \in \mathcal{P}_k\}}\big]\Big)^2 \\ \nonumber
        & = \frac{L(L-1)}{N_{\rm RF}(N_{\rm RF}-1)},
    \end{align}
where $(a)$ holds by $\bbV[XY]=\bbE[X^2]\bbE[Y^2] - (\bbE[X])^2(\bbE[Y])^2$ for independent $X$ and $Y$. Therefore, \eqref{eq:Vgw} is derived as
    \begin{align}
        \label{eq:Vgw2}
        \bbV\!\left[ \left|\bg_{k}^H \bw_i\right|^2 \right] \!=\! \frac{1}{N_{\rm RF}^2}\!\left(\!L \!+\! \sum_{\ell_1\neq \ell_2}^{N_{\rm RF}}\!\frac{L(L\!-\!1)}{N_{\rm RF}(N_{\rm RF}\!-\!1)}\!\right)\! =\! \left(\frac{L}{N_{\rm RF}}\right)^{\!2}\!.
    \end{align}
    Putting  \eqref{eq:Egw} and \eqref{eq:Vgw2} into \eqref{eq:auto_QN2}, the auto quantizaiton noise variance $\bbE\big[\Psi_k^{\rm auto} \big]$ becomes \eqref{eq:QN_auto_result}.
    \qed
    
\section{Proof of Lemma \ref{lem:QN_cross}}
\label{appx:QN_cross}

    We derive the cross quantization noise variance in \eqref{eq:QN_cross} as
    \begin{align} \nonumber
        \bbE\big[\Psi_k^{\rm cross}\big]\! &= \!\bbE\bigg[\sum_{i=1}^{N_{\rm RF}}\sum_{u \neq 1}^{N_u}\bh_{{\rm b},k}^H \bw_i \bw_i^H \bh_{{\rm b},u}\bh_{{\rm b},u}^H\bw_i\bw_i^H \bh_{{\rm b},k}\bigg] \\ \nonumber
        &=\! \left(\!\frac{N_r}{L}\!\right)^{\!2}\!\!\bbE_{\bg_{k}}\!\!\left[ \!\sum_{i=1}^{N_{\rm RF}}\!\sum_{u \neq 1}^{N_u}\!\bg_{k}^H\! \bw_i \!\bw_i^H \bbE_{\bg_{u}}\!\Big[\bg_{u}\bg_{u}^H\!\Big]\bw_i\!\bw_i^H \!\bg_{k}\!\right] \\ \nonumber
        & =\frac{N_r^2(N_u-1)}{LN_{\rm RF}}\sum_{i=1}^{N_{\rm RF}}\bbE_{\bg_{k}}\Big[\bg_{k}^H \bw_i \bw_i^H \bg_{k}\Big]\\ \nonumber
        & \stackrel{(a)}= \frac{N_r^2(N_u-1)}{N_{\rm RF}}
    \end{align}
    where $(a)$ follows from $\bbE\big[ |\bg_{k}^H \bw_i|^2 \big] = \frac{L}{{N_{\rm RF}}}$ in \eqref{eq:Egw}. 
\qed

\section{Proof of Theorem \ref{thm:ergodic_rate_mrc}}
\label{appx:ergodic_rate_mrc}

    To compute \eqref{eq:E_rk_approx}, we first derive $\bbE[\|\bh_{{\rm b},k}\|^2]$ as
    \begin{align}
        \label{eq:Eh}
        \bbE\Big[\|\bh_{{\rm b},k}\|^2\Big] = \frac{N_r}{L}\bbE\big[\|\bg_{k}\|^2\big] \stackrel{(a)}=N_r
    \end{align}
    where $(a)$ follows from $\|\bg_k\|^2 \sim \chi^2_{2L}$.
    Next, we compute $\bbE[\|\bh_{{\rm b},k}\|^4]$ as 
    \begin{align} \nonumber
        \bbE\Big[\|\bh_{{\rm b},k}\|^4\Big] &= \bbV\big[\|\bh_{{\rm b},k}\|^2\big] +\big(\bbE\big[\|\bh_{{\rm b},k}\|^2\big]\big)^2\\ \nonumber
        &= \left(\frac{N_r}{L}\right)^2\left(\bbV\big[\|\bg_{k}\|^2\big] +\Big(\bbE\big[\|\bg_{k}\|^2\big]\Big)^2\right) \\
        \label{eq:Ehh}
        &= \frac{N_r^2(1 + L)}{L}.
    \end{align}
    The inter-user interference term $\bbE[|\bh_{{\rm b},k}^H\bh_{{\rm b},i}|^2]$ is computed as
    \begin{align}
        \nonumber
        \bbE\Big[|\bh_{{\rm b},k}^H\bh_{{\rm b},i}|^2\Big] \!&=\! \left(\!\frac{N_r}{L}\!\right)^{\!2} \!\!\bbE\big[|\bg_{k}^H\bg_{i}|^2\big] \!=\! \left(\!\frac{N_r}{L}\!\right)^{\!\!2} \sum_{\ell=1}^{N_{\rm RF}} \! \bbE\big[|g_{\ell,k}^* g_{\ell,i}|^2\big]\\
        \nonumber
        &= \left(\!\frac{N_r}{L}\!\right)^{\!2} \sum_{\ell=1}^{N_{\rm RF}} \bbE\Big[| \xi_{\ell,k}^* \mathds{1}_{\{\ell \in \mathcal{P}_k\}} \xi_{\ell,i} \mathds{1}_{\{\ell \in \mathcal{P}_i\}}|^2\Big] 
        \\ \label{eq:Eh_cross} 
        & = \frac{N_r^2}{N_{\rm RF}}.
    \end{align}
    Finally, we compute the quantization variance term $\bbE[\Psi_k]$ as 
    \begin{align}
    \nonumber
       \bbE\big[\Psi_k\big] &= \bbE\big[\Psi_k^{\rm auto}\big] + \bbE\big[\Psi_k^{\rm cross}\big]\\
    \label{eq:QN_final}
    & \stackrel{(a)}= \frac{2N_r^2}{N_{\rm RF}} + \frac{N_r^2(N_u-1)}{N_{\rm RF}},
    \end{align}
    where $\bbE\big[\Psi_k^{\rm auto}\big]$ and $\bbE\big[\Psi_k^{\rm cross}\big]$ are in \eqref{eq:QN_auto} and \eqref{eq:QN_cross}, respectively, and $(a)$ follows from Lemma \ref{lem:QN_auto} and Lemma \ref{lem:QN_cross}.
    
    Putting \eqref{eq:Eh}, \eqref{eq:Ehh}, \eqref{eq:Eh_cross}, and \eqref{eq:QN_final} into \eqref{eq:E_rk_approx}, we derive the approximated ergodic rate of \eqref{eq:E_rk_approx} in closed form.
    The ergodic rate is equivalent to $N_u$ users, which leads to the ergodic sum rate in \eqref{eq:ER_mrc}.
    This completes the proof of Theorem \ref{thm:ergodic_rate_mrc}.
    \qed
    
\section{Proof of Corollary \ref{cor:ergodic_rate_one_mrc}}
\label{appx:ergodic_rate_one_mrc}

Without the second analog combiner $\bW_{\rm RF}$, the approximated ergodic rate of user $k$ can be computed as \eqref{eq:E_rk_approx} by substituting the average quantization noise variance for the two-stage analog combining $\bbE[\Psi_k]$ with the following average quantization noise variance:
    \begin{align}\nonumber
        \bbE\big[\hat{\Psi}_k\big] &\!=\! \bbE\Big[\bh_{{\rm b},k}^H {\rm diag}\big\{\bH_{\rm b}\bH_{\rm b}^H\big\}\bh_{{\rm b},k}\Big]\\ \nonumber
        &\!=\! \bbE\!\left[\left(\frac{N_r}{L}\right)^2\sum_{\ell=1}^{N_{\rm RF}}|g_{\ell,k}|^2\sum_{u=1}^{N_u}|g_{\ell,u}|^2\right]\\
        \label{eq:QN_one}
        & \!=\! \left(\!\frac{N_r}{L}\!\right)^{\!2}\!\left(\sum_{\ell=1}^{N_{\rm RF}}\!\bbE\left[|g_{\ell,k}|^4\right]\!+\!\sum_{\ell=1}^{N_{\rm RF}}\!\sum_{u\neq k}^{N_u}\!\bbE\Big[|g_{\ell,k}|^2|g_{\ell,u}|^2\Big]\!\right).
    \end{align}
    Here, $\bbE[|g_{\ell,k}|^4]$ in \eqref{eq:QN_one} is computed as
    \begin{align}
     	\nonumber
        \bbE\Big[|g_{\ell,k}|^4\Big] &=\bbE\Big[\mathds{1}_{\{\ell \in \mathcal{P}_k\}}\Big] \bbE\Big[\big|\xi_{\ell,k}\big|^4\Big]\\
        \nonumber
        &=\frac{L}{N_{\rm RF}}\left(\bbV\Big[\big|\xi_{\ell,k}\big|^2\Big]+\left(\bbE\Big[\big|\xi_{\ell,k}\big|^2\Big]\right)^2\right)\\
        \label{eq:QN_one_auto}
        &= \frac{2L}{N_{\rm RF}},
    \end{align}
    and the second expectation term $\bbE[|g_{\ell,k}|^2|g_{\ell,u}|^2]$ is derived as
    \begin{align}
        \nonumber
        \bbE\Big[|g_{\ell,k}|^2|g_{\ell,u}|^2\Big] &= \bbE\Big[\mathds{1}_{\{\ell \in \mathcal{P}_k\}}\mathds{1}_{\{\ell \in \mathcal{P}_u\}}\Big] \bbE\Big[|\xi_{\ell,k}|^2|\xi_{\ell,u}|^2\Big] \\
        \label{eq:QN_one_cross}
        &= \left(\frac{L}{N_{\rm RF}}\right)^2.
    \end{align}
    Putting \eqref{eq:QN_one_auto} and \eqref{eq:QN_one_cross} into \eqref{eq:QN_one}, we derive the average quantization noise variance for the one-stage analog combining as
    \begin{align}
        \nonumber
        \bbE\big[\hat{\Psi}_k\big] = 
        N_r^2\left(\frac{2}{L} + \frac{N_u-1}{N_{\rm RF}}\right).
    \end{align}
    This completes the proof of Corollary \ref{cor:ergodic_rate_one_mrc}.
    \qed
\end{appendices}

\bibliographystyle{IEEEtran}
\bibliography{hybrid2LR.bib}
\end{document}